\newtheorem{theorem}{Theorem}
\newtheorem{definition}{Definition}
\newtheorem{lemma}{Lemma}
\newtheorem{remark}{Remark}
\begin{document}

\title{Real Interference Alignment with Real Numbers}

\author{ Abolfazl~Seyed~Motahari, ~Shahab Oveis Gharan,~ and~Amir~Keyvan~Khandani
\\\small Electrical and Computer Engineering Department, University of Waterloo
\\\small Waterloo, ON, Canada N2L3G1
\\ \{abolfazl,shahab,khandani\}@cst.uwaterloo.ca
}



\maketitle

\let\thefootnote\relax\footnotetext{Financial support provided by Nortel and the
corresponding matching funds by the Natural Sciences and Engineering
Research Council of Canada (NSERC), and Ontario Centers of
Excellence (OCE) are gratefully acknowledged.}

\begin{abstract}
A novel coding scheme applicable in networks with single antenna nodes is proposed. This scheme converts a single
antenna system to an equivalent Multiple Input Multiple Output (MIMO) system with fractional dimensions. Interference
can be aligned along these dimensions and higher Multiplexing gains can be achieved. Tools from the field of
Diophantine approximation in number theory are used to show that the proposed coding scheme in fact mimics the
traditional schemes used in MIMO systems where each data stream is sent along a direction and alignment happens when
several streams arrive at the same direction. Two types of constellation are proposed for the encoding part, namely
the single layer constellation and the multi-layer constellation.

Using the single layer constellation, the coding scheme is applied to the two-user $X$ channel and the three-user
Gaussian Interference Channel (GIC). In case of the two-user $X$ channel, it is proved that the total
Degrees-of-Freedom (DOF), i.e. $\frac{4}{3}$, of the channel is achievable almost surely. This is the first example
in which it is shown that a time invariant single antenna system does not fall short of achieving its total DOF. For
the three-user GIC, it is shown that the DOF of $\frac{4}{3}$ is achievable almost surely.

Using the multi-layer constellation, the coding scheme is applied to the symmetric three-user GIC. Achievable DOFs
are derived for all channel gains. As a function of the channel gain, it is observed that the DOF is everywhere
discontinuous. In particular, it is proved that for the irrational channel gains the achievable DOF meets the upper
bound $\frac{3}{2}$. For the rational gains, the achievable DOF has a gap to the available upper bounds. By allowing
carry over from multiple layers, however, it is shown that higher DOFs can be achieved.
\end{abstract}

\begin{keywords}
Interference channels, interference alignment, number theory, fractional dimensions, Diophantine approximation.
\end{keywords}

\newpage

\section{Introduction}
\PARstart{I}{nterference management} plays a crucial role in future wireless systems as
the number of users sharing the same medium is rapidly growing. In fact, an
increase in the number of users results in an increase in the amount of
interference in the system. This interference may cause a severe degradation
in the system's performance.

The study of interaction between two users sharing the same channel goes
back to Shannon's work on the two-way channel in \cite{TWOWAY:SHANNON}. His
work was followed by several researchers and the two-user interference
channel emerged as the fundamental problem regarding interaction between users
causing interference in the networks.

The problem of characterizing  the capacity region of the two-user Gaussian Interference Channel (GIC) is still open.
In \cite{Etkin}, a major step is taken and the region is characterized within one bit. Followed by this work, the sum
capacity of the two-user GIC is derived in low Signal to Noise Ratios (SNR), see \cite{abolfazl,Shang,Annapureddy}.
Interestingly, it is proved that treating interference as noise is optimal within the given range of SNRs.

In high SNR regimes, if the interference is ignored and
treated as noise then the throughput of the system decreases dramatically. In particular, as the number of
active interfering users increases the interference becomes more and more severe and the throughput drops rapidly.
However, this contradicts the actual behavior of the system as recent results show that the throughput is  constant
regardless of the number of active users in the system, c.f. \cite{cadambe2008iaa}.

Interference alignment is a solution for making the interference less severe at receivers. In
\cite{maddahali2008com}, Maddah-Ali, Motahari, and Khandani pioneered the concept of
interference alignment and showed its capability in achieving the
full Degrees-Of-Freedom (DOF) of a class of two-user $X$ channels. Being simple and powerful at the same time,
interference alignment provided the spur for further research. Not only usable for lowering the harmful effect of
interference, but also it can be applied to provide security in networks as proposed in \cite{Ozan-gamal-lai-poor}.

Interference alignment in $n$-dimensional Euclidean spaces for $n\geq 2$ is studied by several researchers, c.f.
\cite{maddahali2008com,jafar2008dfr,cadambe2008iaa,cadambe2008dfw}. This method can be applied, for example, by
choosing a specific subspace for interference, and forcing all interfering transmitters
to send data such that it is received at the pre-assigned subspace in the receiver. Using this method, Cadambe and
Jafar showed that a $K$-user Gaussian interference channel with varying channel gains can achieve its total DOF which
is $\frac{K}{2}$.

Application of interference alignment is not confined to two or more dimensional spaces. In fact, it can be applied
in one-dimensional spaces as well, c.f. \cite{Bresler-Parekh-tse,Sridharan,sridharan3llc}. In \cite{Sridharan}, after
aligning interference using lattice codes the aggregated signal is decoded and its effect is subtracted from the
received signal. In fact, \cite{Sridharan} shows that the very-strong interference region of the $K$-user GIC is
strictly larger than the corresponding region when alignment is not applied. In this method, to make the interference
less severe, transmitters use lattice codes to reduce the code-rate of the interference which guarantees decodability
of the interference at the receiver. In \cite{sridharan3llc}, Sridharan et al. showed that the DOF of a class of
3-user GICs with fixed channel gains can be greater than one. This result obtained using layered lattice codes along
with successive decoding at the receiver.

The first examples of interference alignment in one-dimensional spaces are reported in \cite{Etkin-Ordentlich} and
\cite{abolfazl-shahab-amir} where the results from the filed of Diophantine approximation in number theory are used
to show that interference can be aligned using properties of rational and irrational numbers and their relations.
They showed that the total DOF of some classes of time-invariant single antenna interference channels can be
achieved. In particular, Etkin and Ordentlich in \cite{Etkin-Ordentlich} proposed an upper bound on the total DOF
which respects the properties of channel gains with respect to being rational or irrational.  Using this upper bound,
surprisingly, they proved that the DOF is everywhere discontinuous.

Built on \cite{abolfazl-shahab-amir} and \cite{Etkin-Ordentlich}, this paper broadens the applications of
interference alignment. In fact, we will show that it is possible to perform alignment in single dimensional systems
such as time-invariant networks equipped with single antennas at all nodes. In Section II, we summarize the main
contributions of this paper.

In Section III, we propose a novel coding scheme in which data streams are encoded using constellation points from
integers and transmitted in the directions of irrational numbers. Two types of constellation designs are considered,
namely the  single layer and the multi-layer constellations. It is shown that the coding provides sufficient tools to
accomplish interference alignment in one-dimensional spaces.

Throughout Section V, the single layer constellation is incorporated in the coding scheme. First, the performance of
a decoder is analyzed using the Khintchine-Groshev theorem in number theory. It is shown that under some regularity
conditions data streams can carry data with fractional multiplexing gains. The two-user $X$ channel is considered as
the first example in which the single layer constellation is incorporated in the coding schem. It is proved that for
this channel the total DOF of $\frac{4}{3}$ is attainable almost surely. For the $K$-user GIC, achievable DOFs are
characterized for some class of channels. Finally, it is proved that the DOF of $\frac{4}{3}$ is achievable for the
three-user GIC almost surely.

Throughout Section V, the multi-layer constellation is incorporated in the coding scheme. The channel under
investigation is the symmetric three-user GIC. An achievable DOF is derived for all channel gains. Viewed as a
function of the channel gain, this achievable DOF is everywhere discontinuous. It is shown that the total DOF of
$\frac{3}{2}$ is achievable for all irrational gains. For rational gains, the achievable rate has a gap to the
available upper bounds. In Section VII, we conclude the paper.

\textbf{Notation}: $\mathbb{R}$, $\mathbb{Q}$, $\mathbb{N}$ represent the set
of real, rational and nonnegative integers, respectively. For a real
number $x$, $\lfloor x\rfloor$ is the greatest integer less than
$x$ and $\lceil  x\rceil$ is the least integer greater than
$x$. For a random variable $X$, $E[X]$ denotes the expectation
value. $(m,n)$ represents the greatest common divisor of  two
integers $m$ and $n$. For two integers $m$ and $n$, $m|n$ means that $n$ is divisible by $m$. Similarly, $m\nmid n$
means that $n$ is not divisible by $m$. $[m\ n]$ denotes the set of integers between $m$ and $n$.

\section{Main Contributions}
In this paper, we are primarily interested in characterizing the total DOF of the two-user $X$ channel and the
$K$-user GIC. Let $\mathcal{C}$ denote the capacity region of the $K$-user GIC (a similar argument can be used for
the $X$ channel). The DOF region denoted by $\mathcal{R}$ associated with the channel is in fact the shape of
$\mathcal{C}$ in high SNR regimes scaled by $\log \text{SNR}$. All extreme points of $\mathcal{R}$ can be identified
by solving the following optimization problem:
\begin{equation}
r_{\boldsymbol{\lambda}}=\lim_{\text{SNR}\rightarrow\infty}\max_{\mathbf{R}\in
\mathcal{C}}\frac{\boldsymbol{\lambda}^t \mathbf{R}}{\log \text{SNR}}.
\end{equation}
The total DOF refers to the case where $\boldsymbol{\lambda}=\{1,1,\ldots,1\}$, i.e., the sum-rate is concerned.
Throughout this paper, $r_{\text{sum}}$ denotes the total DOF of the system. In what follows we summarize main
contributions of this paper regarding the total DOF of the $X$ channel and the $K$-user GIC.

\subsection{Bringing Another Dimension to Life: Rational Dimension}
Proposed in \cite{maddahali2008com}, the first example of interference alignment is done in Euclidean spaces.
Briefly, the $n$-dimensional Euclidean space ($n\geq2$) available at a receiver is partitioned into two subspaces. A
subspace is dedicated to interference and all interfering users are forced to respect this constraint. The major
technique is to reduce the dimension of this subspace so that the available dimension in the signal subspace allows
higher data rate for the intended user. Alignment using structural codes is also considered by several researchers
\cite{Bresler-Parekh-tse,sridharan3llc}. Structural interference alignment is used to make the interference caused by
users less severe by reducing the number of possible codewords at receivers. Even though useable in one-dimensional
spaces, this technique does not allow transmission of different data streams as there is only one dimension available
for transmission.

In this paper, we show that there exist available dimensions (called rational dimensions) in one-dimensional spaces
which open new ways of transmitting several data streams from a transmitter and interference alignment at receivers.
A coding scheme that provides sufficient tools to incorporate the rational dimensions in transmission is proposed.
This coding scheme relies on the fact that irrational numbers can play the role of directions in Euclidean spaces and
data can be sent by using rational numbers. This fact is proved by using the results of Hurwitz, Khintchine, and
Groshev obtained in the field of Diophantine Approximation. In the encoding part, two types of constellation are used
to modulate data streams. Type I or single layer constellation refers to the case where all integer points in an
interval are chosen as constellation points. Despite its simplicity, it is shown that the single layer constellation
is capable of achieving the total DOF of several channels. Type II or multi-layer constellation refers to the case
that a subset of integer points in an interval is chosen as constellation points. Being able of achieving the total
DOF of some channels, this constellation is more useful when all channel gains are rational.

\subsection{Breaking the Ice: Alignment in One dimension}
Obtained results regarding the total DOF of networks are based on interference alignment in $n$-dimensional Euclidean
spaces where $n\geq 2$, c.f.
\cite{maddahali2008com,jafar2008dfr,cadambe2008iaa,Cadambe-jarfar-wang,Gomadam-cadambe-jafar,Huang-jafar}. For
example in \cite{cadambe2008iaa}, the total DOF of the $K$-user Gaussian interference channel is derived when each
transmitter and receiver is equipped with a single antenna. In order to be able to align interference, however, it is
assumed that the channel is varying. This in fact means that nodes are equipped with multiple antennas and channel
coefficients are diagonal matrices.

Recently, \cite{Etkin-Ordentlich} and \cite{abolfazl-shahab-amir} independently reported  that the total DOF of some
classes of fully connected GICs can be achieved. Although being time invariant, these classes have measure zero with
respect to Lebesque measure. In this paper, we prove that the total DOF of time invariant two-user $X$ channel which
is $\frac{4}{3}$ can be attained almost surely. In other words, the set of channels that this DOF can not be achieved
has measure zero. This is done by incorporating rational dimensions in transmission. In fact, two independent data
streams from each transmitter are send while at each receiver two interfering streams are aligned. This achieves the
multiplexing gain of $\frac{1}{3}$ per data streams and the total of $\frac{4}{3}$ for the system. We also prove that
the same DOF can be achieved for the three-user GIC. However, for this case there is a gap between the available
upper bound, i.e. $\frac{3}{2}$, and the achievable DOF.

\subsection{$K$-user GICs: Channel Gains May Help}
In \cite{Etkin-Ordentlich}, it is shown that the total DOF of a $K$-user GIC interference channel can be achieved
almost surely when all the cross links have rational gains while the direct links have irrational gains. This result
is generalized by introducing the concept of rational dimensions. The rational dimension of a set of numbers is
defined as the dimension of numbers over the filed of rational numbers. For example, if all numbers are rational then
the dimension is one. We show that if the cross links arriving at a receiver has rational dimension $m$ or less and
it is the case for all receivers then the total DOF of $\frac{K}{m+1}$ is achievable. In special case where $m=1$, it
collapses to the result of Etkin and Ordentlich.

\subsection{Strange Behavior: Discontinuity of DOF}
To highlight some important features of the three-user GIC, the symmetric case in which the channel is governed by a
single channel gain is considered. First, it is proved that when the channel gain is irrational then the total DOF of
the channel can be achieved. This is obtained by using  multi-layer constellations in encoding  together with
Hurwitz's theorem in analysis. There is, however, a subtle difference between this result and the one obtained for
the $K$-user GIC. Here, we prove that the result holds for all irrational numbers while in the $K$-user case we prove
that it holds for almost all real numbers. In fact, there may be some irrational numbers not satisfying the
requirements of the $K$-user case.

When the channel gain is rational then more sophisticated multi-layer constellation design is required to achieve
higher performance. The reason is that interference and data  are sharing the same dimension and splitting them
requires more structure in constellations. We propose a multi-layer constellation in which besides satisfying the
requirement of splitting interference and data, points are packed efficiently in the real line. This is accomplished
by allowing carry over from different levels. Being much simpler in design, avoiding carry over, however, results in
lower DOF. We show that the DOF is roughly related to the maximum of numerator and denominator. But it is always less
than $\frac{3}{2}$.

Viewing the total DOF of the channel as a function of the channel gain, we observe that this function is everywhere
discontinuous which mean its is discontinuous at all points. This is a strange behavior as in all previous results
the DOF is a continuous function almost everywhere. Although this is only achievable, the result of Etkin an
Ordentlich in \cite{Etkin-Ordentlich} confirms that this is in fact the case.

\section{Coding Scheme}\label{sec coding}
In this section, a coding scheme for data transmission in a shared medium is proposed. It is assumed that the
channel is real, additive, and time invariant. The Additive White Gaussian Noise (AWGN)  with variance $\sigma^2$ is
added to the received signals at all receivers. Moreover, transmitters are subject to the power constraint $P$. The
Signal to Noise Ratio (SNR) is defined as $\text{SNR}=\frac{P}{\sigma^2}$.

The proposed coding is rather general and can be applied to several communication systems as it will be explored in
detail in the following sections. In what follows, the encoding and decoding parts of the scheme are explained. The
important features unique to the scheme are also investigated.

\subsection{Encoding}

A transmitter limits its input symbols to a finite set which is called the transmit constellation. Even though
it has access to the continuum of real numbers, restriction to a finite set has the benefit of easy and feasible
interference management. Having a set of finite points as input symbols, however, does not rule out transmission of
multiple data streams from a single transmitter. In fact, there are situations where a transmitter wishes to send
data to several receivers (such as the $X$ channel) or having multiple data streams intended for a single receiver
increases the throughput of the system (such as the interference channel). In what follows, it is shown how a finite
set of points can accommodate different data streams.

Let us first explain the encoding of a single data stream. The transmitter selects a constellation $\mathcal{U}_i$ to
send the data stream $i$. The constellation points are chosen from integer points, i.e., $\mathcal{U}_i\subset
\mathbb{Z}$. It is assumed that $\mathcal{U}_i$ is a bounded set. Hence, there is a constant $Q_i$ such
that $\mathcal{U}_i\subset [-Q_i,Q_i]$. The cardinality of $\mathcal{U}_i$ which limits the rate of data stream $i$
is denoted by $|\mathcal{U}_i|$.

Two choices for the constellation $\mathcal{U}_i$ are considered. The first one, referred to as Type I or single
layer constellation, corresponds to the case where all integers between $-Q_i$ and $Q_i$ are selected. This is a
simple choice yet capable of achieving the total DOF of several channels.

In the second one, referred to as Type II or multi-layer constellation, constellation points are represented to
a base $W\in\mathds{N}$. In other words, a point in the constellation can be
written as
\begin{equation}\label{TypeII}
u_i(\mathbf{b})=\sum_{k=0}^{L-1}{b_{l}W^{l}},
\end{equation}
where $b_{l}\in\{0,1,\ldots,a-1\}$ and
$l\in \{1,2,\ldots,L-1\}$. $\mathbf{b}=(b_{0},\ldots,b_{L-1})$ is in fact another way of expressing
$u_i$ in $W$-array representation. $a$ is the upper limit on the
digits and clearly $a<W$. In fact, if $a=W$ then Type II constellation renders itself as Type I constellation which
is not of interest. Each constellation point can be expressed by $L$ digits and each digit carries independent
message. Each of these digits is referred to as a layer of data. In other words, Type II constellation carries $L$
layers of information.

Having formed the constellation, the transmitter constructs a random codebook for data stream $i$ with rate $R_i$.
This can be accomplished by choosing a probability distribution on the input alphabets. The uniform distribution is
the first candidate and it is selected for the sake of brevity.

tight in general, using this bound does not decrease the performance of the system as long as the DOF is concerned.

In general, the transmitter wishes to send $L$ data streams to one or several receivers. It first constructs $L$ data
streams using the above procedure. Then, it combines them using a linear combination of all data streams. The
transmit signal can be represented by
\begin{equation}
 u=T_1u_1+T_2u_2+\ldots+T_Lu_L,
\end{equation}
where $u_i\in \mathcal{U}_i$ carries information for data stream $i$. $T_i$ is a constant real number that functions
as a separator splitting data stream $i$ from the transmit signal. In fact, one can make an analogy between single
and multiple antenna systems by regarding that the data stream $i$ is in fact transmitted in the direction $T_i$.

$T_i$'s are rationally independent, i.e., the equation $T_1x_1+T_2x_2+\ldots+T_Lx_L=0$
has no rational solutions. This independence is due to the fact that a unique map from constellation points to the
message sets is required. By relying on this independence, any real number $u$ belonging to the set of
constellation points is uniquely decomposable as $u=\sum_{i=1}^L T_i u_i$. Observe that if there is another possible
decomposition $u=\sum_{i=1}^L T_i u_i'$ then it forces $T_i$'s to be dependent.

To adjust the power, the transmitter multiplies the signal by a constant $A$, i.e., the transmit signal is $x=Au$.


\subsection{Received Signal and Interference Alignment}
A receiver in the system may observe a signal which is a linear combination of several data streams and AWGN. The
received signal in its general form can be represented as
\begin{equation}
 y=g_0 u_0+\underbrace{g_1 u_1+\ldots+g_M u_M}_I +z,
\end{equation}
where $u_i$ is the received signal corresponding to the data stream $i$ and $z$ is the AWGN with covariance
$\sigma^2$. $g_i$ is a constant which encapsulates several multiplicative factors from a transmitter to the
receiver. Without loss of generality, it is assumed that the receiver wishes to decode the first data stream
$u_0$ which is encoded with rate $R_0$. The rest of data streams is the interference for the intended data stream and
is denoted by $I$.

The proposed encoding scheme is not optimal in general. However, it provides sufficient tools to accomplish
interference alignment in the network which in turn maximizes the throughput of the system. In $n$-dimensional
Euclidean spaces ($n\geq2$), two interfering signals are aligned when they receive in the same direction at the
receiver. In general, $m$ signals are aligned at a receiver if they span a subspace with dimension less than $m$. We
claim that, surprisingly,  similar arguments can be applied in one-dimensional spaces. The definition of aligned data
streams is
needed first.
\begin{definition}[Aligned Data Streams]
Two data streams $u_i$ and $u_j$ are said to be aligned at a receiver if the receiver observes a rational
combination of them.
\end{definition}

As it will be shown in the following sections, if two streams are aligned then their effect at the
receiver is similar to a single data stream at high SNR regimes. This is due to the fact that rational numbers
form a filed and therefore the sum of constellations is again a constellation from $\mathbb{Q}$ with
enlarged cardinality.

To increase $R_0$, it is desirable to align data streams in the interference
part of the signal, i.e. $I$. The interference alignment in its simplest form happens when several
data streams arrive at the receiver with similar coefficients, e.g. $I=gu_1+gu_2+\ldots+gu_M$. In this case, the
data streams can be bundled to a single stream with the same coefficient. It is possible to extend this simple case
of interference alignment to more general cases. First, the following definition is needed.
\begin{definition}[Rational Dimension]
The rational dimension of a set of real numbers $\{h_1,h_2,\ldots,h_M\}$ is $m$ if there exists a set of real
numbers $\{H_1,H_2,\ldots,H_m\}$ such that each $h_i$ can be represented as a rational combination of
$H_j$'s, i.e., $h_i=\alpha_{i1}H_1+\alpha_{i2}H_2+\ldots+\alpha_{im}H_m$ where $\alpha_{ik}\in \mathbb{Q}$ for
all $k\in\{1,2,\ldots,m\}$. In particular, $\{h_1,h_2,\ldots,h_M\}$ are rationally independent if the
rational dimension is $M$, i.e., none of the numbers can be represented as the rational combination of other numbers.
\end{definition}

\begin{remark}
In the above definition, one can replace the set of rational numbers with integers as multiplication of irrational
numbers with integers results in irrational numbers. Therefore, the two alternative definitions are used in this
paper.
\end{remark}

In fact, the rational dimension is the effective dimension seen at the receiver. To see this, suppose that the
coefficients in the interference part of the signal $I=g_1u_1+g_2u_2+\ldots+g_Mu_M$ has rational
dimension $m$ with bases $\{G_1,\ldots,G_m\}$. Therefore, each $g_i$ for $i\in\{1,2,\ldots,M\}$ can be written as
$g_i=\alpha_{i1}G_1+\alpha_{i2}G_2+\ldots+\alpha_{im}G_m$ where $\alpha_{ik}$ is an integer. Plugging into the
equation, it is easy to see that $I$ can be represented as $I=G_1I_1+G_2I_2+\ldots+G_mI_m$ where $I_k$ is a linear
combination of data streams with integer coefficient. In fact, if the coefficients have dimension $m$ then the
interference part of the signal occupies $m$ rational dimensions and one dimension is available for the signal. On
the
other hand, since the dimension is one, it can be concluded that multiplexing gain of the intended data stream is
$\frac{1}{m+1}$. In one extreme case the rational dimension is one and all coefficients are an integer
multiple of a real number and $m=1$.

\subsection{Decoding}
After rearranging the interference part of the signal, the received signal can be represented as
\begin{equation}\label{received signal}
 y=G_0u_0+G_1I_1+\ldots+G_mI_m+z,
\end{equation}
where $G_0=g_0$ to unify the notation. In what follows, the decoding scheme used to decode $u_1$ from $y$ is
explained. It is worth noting that if the receiver is interested in more than one data stream then it performs the
same decoding procedure for each data stream.

At the receiver, the received signal is first passed through a hard decoder.
The hard decoder looks at the received constellation
$\mathcal{U}_r=G_0\mathcal{U}_0+G_1\mathcal{I}_1+\ldots+G_m\mathcal{I}_m$ and maps the received
signal to the nearest point in the constellation. This changes the continuous channel to a discrete one in which the
input symbols are from the transmit constellation $\mathcal{U}_1$ and the output symbols are from received
constellation.

\begin{remark}
$\mathcal{I}_j$ is the constellation due to single or multiple data streams. Since it is assumed that in the
latter case it is a linear combination of multiple data streams with integer coefficients, it can be concluded that
$\mathcal{I}_j\subset \mathbb{Z}$ for $j\in\{1,2,\ldots,m\}$.
\end{remark}

To bound the performance of the decoder, it is assumed that the received constellation has the property that there
is a many-to-one map from $\mathcal{U}_r$ to $\mathcal{U}_0$. This in fact implies that if there is no additive
noise in the channel then the receiver can decode the data stream with zero error probability. This property is
called property $\Gamma$. It is assumed that this property holds for all received constellations. To satisfy this
requirement at all receivers, usually a careful transmit constellation design is needed at all transmitters.

Let $d_{\text{min}}$ denote the minimum distance in the received constellation. Having Property $\Gamma$, the
receiver passes the output of the hard decoder through the many-to-one map from $\mathcal{U}_r$ to $\mathcal{U}_0$.
The output is called $\hat{u}_1$. Now, a joint-typical decoder can be used to decode the data stream from a block of
$\hat{u}_0$s. To calculate the achievable rate of this scheme, the error probability of transmitting a symbol from
$\mathcal{U}_0$ and receiving another symbol, i.e. $P_e=Pr\{\hat{U}_0\neq U_0\}$ is bounded as:
\begin{IEEEeqnarray}{rl}\label{error probability}
P_e &\leq Q\left(\frac{d_{\text{min}}}{2\sigma}\right)\leq
\exp\left({-\frac{ d_{\text{min}}^2}{8\sigma^2}}\right).
\end{IEEEeqnarray}

Now, $P_e$ can be used to lower bound the rate achievable for the data stream. In \cite{Etkin-Ordentlich}, Etkin and
Ordentlich used Fano's
inequality to obtain a lower bound on the achievable rate which is tight in high SNR regimes. Following similar
steps, one can obtain
\begin{IEEEeqnarray}{rl}
 R_0 & = H(\hat{U}_0,U_0)\nonumber\\
     & =H(U_0)-H(U_0|\hat{U}_0)\nonumber\\
     & \stackrel{a}{\geq} H(U_0)-1-P_e\log |\mathcal{U}_0|\nonumber\\
     & \stackrel{b}{\geq} \log |\mathcal{U}_0|-1-P_e\log |\mathcal{U}_0| \label{lower bound on R}
\end{IEEEeqnarray}
where (a) follows from Fano's inequality and (b) follows from the fact that $U_1$ has the uniform distribution. To
have multiplexing gain of at least $r_0$, $|U_1|$ needs to scale as $SNR^{r_0}$. Moreover, if $P_e$ scales as
$\exp\left(SNR^{-\epsilon}\right)$ for an $\epsilon>0$ then it can be shown that $\frac{R_0}{\log \text{SNR}}$
approaches $r_0$ at high SNR regimes.

\begin{remark}
After interference alignment the interference term has no longer the uniform distribution. However, the lower bound
on the achievable rate given in (\ref{lower bound on R}) is independent of the probability distributions of the
interference terms. It is possible to obtain better performance provided the distribution of the interference is
exploited.
\end{remark}

\section{Single Layer Constellation}
In this section, the single layer constellation is used to modulate all data streams at all transmitters. Even
though it is the simplest form of constellation, it is powerful enough to provide interference alignment which in
turn increases the throughput of the system. Before deriving import results regarding DOF of the $X$ and interference
channels using this constellation, the performance of a typical decoder is analyzed. The attempt is to make the
analysis universal and applicable to both channels.

\subsection{Peformance Analysis: The Khintchine-Groshev Theorem}
The decoding scheme proposed in the previous section is
used to decode the data stream $u_0$ from the received signal in (\ref{received signal}). To satisfy Property
$\Gamma$, it is assumed that $\{G_0,G_1,\ldots,G_m\}$ are independent over rational numbers. Due to this
independence, any point in the received constellation has a unique representation in the bases
$\{G_0,G_1,\ldots,G_m\}$ and therefore Property $\Gamma$ holds in this case.

\begin{remark}
In a random environment, it is easy to show that the set of $\{G_0,G_1,\ldots,G_m\}$ being dependent
has measure zero (with respect to Lebesgue measure). Hence, in this section it is assumed that Property $\Gamma$
holds unless otherwise stated.
\end{remark}

To use the lower bound on the data rate given in (\ref{lower bound on R}), one needs to calculate the minimum
distance between points in the received constellation. Let us assume each stream in (\ref{received signal}) is
bounded (as it is the case since transmit constellations are bounded by the assumption). In particular,
$\mathcal{U}_0=[-Q_0,Q_0]$ and $\mathcal{I}_j=[-Q_j,Q_j]$ for all $j\in\{1,2,\ldots,m\}$.  Since points in the
received constellation are irregular, finding $d_{\min}$ is not easy in general. Thanks to the theorems of Khintchine
and Groshev, however, it is possible to lower bound the minimum distance. As it will be shown later, using this lower
bound at high SNR regimes is asymptotically optimum. We digress here and explain some background
needed for stating the theorem of Khintchine and Groshev.

The field of Diophantine approximation in number theory deals with approximation of real numbers with rational
numbers. The reader is referred to \cite{schmidt,hardy} and the references therein. The Khintchine theorem is one of
the cornerstones in this field. It gives a criteria for a given function $\psi:\mathbb{N}\to\mathbb{R}_+$ and real
number $\alpha$ such that $|p+\alpha q|<\psi(|q|)$  has
either infinitely many solutions or at most finitely many solutions for $(p,q)\in \mathbb{Z}^2$. Let
$\mathcal{A}(\psi)$ denote the set of real numbers such that $|p+\alpha q|<\psi(|q|)$  has infinitely many
solutions in integers. The theorem has two parts. The first part is the convergent part and states that if
$\psi(|q|)$ is convergent, i.e., $$ \sum_{q=1}^\infty \psi(q)<\infty$$ then $\mathcal{A}(\psi)$ has measure zero
with respect to Lebesque measure. This part can be rephrased in more convenient way as follows. For almost all real
numbers, $|p+\alpha q|>\psi(|q|)$ holds for all $(p,q)\in \mathbb{Z}^2$ except for finitely many of them. Since the
number of integers violating the inequality is finite, one can find a constant $\kappa$ such that  $$|p+\alpha
q|>\kappa\psi(|q|)$$ holds for all integers $p$ and $q$ almost surely. The divergent part of the theorem states that
$\mathcal{A}(\psi)$ has the full measure, i.e. the set $\mathbb{R}-\mathcal{A}(\psi)$ has measure zero, provided
$\psi$ is decreasing and $\psi(|q|)$ is divergent, i.e., $$ \sum_{q=1}^\infty \psi(q)=\infty.$$

There is an extension to Khintchine's theorem which regards the approximation of linear forms. Let
$\boldsymbol{\alpha}=(\alpha_1,\alpha_2,\ldots,\alpha_m)$ and $\mathbf{q}=(q_1,q_2,\ldots,q_m)$ denote an $m$-tuple
in $\mathbb{R}^m$ and $\mathbb{Z}^m$, respectively. Let $\mathcal{A}_m(\psi)$ denote the set of $m$-tuple real
numbers $\boldsymbol{\alpha}$ such that
\begin{equation}
 |p+\alpha_1q_1+\alpha_2q_2+\ldots+\alpha_mq_m|<\psi (|\mathbf{q}|_{\infty})
\end{equation}
has infinitely many solutions for $p\in \mathbb{Z}$ and $\mathbf{q}\in\mathbb{Z}^m$. $|\mathbf{q}|_{\infty}$
is the supreme norm of $\mathbf{q}$ defined as $\max_i |q_i|$. The following theorem gives the Lebesque measure of
the set $\mathcal{A}_m(\psi)$.

\begin{theorem}[Khintchine-Groshev]
Let $\psi:\mathbb{N}\to\mathbb{R}^+$. Then the set $\mathcal{A}_{m}(\psi)$ has measure zero provided
\begin{equation}\label{convergence}
 \sum_{q=1}^\infty q^{m-1}\psi(q)<\infty,
\end{equation}
and has the full measure if
\begin{equation}
 \sum_{q=1}^\infty q^{m-1}\psi(q)=\infty \quad  \text{ and $\psi$ is monotonic}.
\end{equation}
\end{theorem}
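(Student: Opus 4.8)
The plan is to realize $\mathcal{A}_m(\psi)$ as a $\limsup$ set and apply the two halves of the Borel--Cantelli lemma. First I would note that the defining inequality is unchanged if any $\alpha_i$ is shifted by an integer, since that shift is absorbed into the free variable $p$; the set is therefore periodic, and it suffices to work inside the unit cube $[0,1]^m$ equipped with Lebesgue measure $\mu$. For each nonzero $\mathbf{q}\in\mathbb{Z}^m$ with $|\mathbf{q}|_{\infty}=Q$, define
\begin{equation}
 E_{\mathbf{q}}=\left\{\boldsymbol{\alpha}\in[0,1]^m : \exists\, p\in\mathbb{Z},\ |p+\alpha_1q_1+\cdots+\alpha_mq_m|<\psi(Q)\right\},
\end{equation}
so that $\mathcal{A}_m(\psi)\cap[0,1]^m=\limsup_{\mathbf{q}}E_{\mathbf{q}}$. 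For a fixed nonzero $\mathbf{q}$ the pushforward of $\mu$ under the map $\boldsymbol{\alpha}\mapsto\alpha_1q_1+\cdots+\alpha_mq_m \bmod 1$ is uniform on $[0,1)$ (all of its nonzero Fourier coefficients vanish), whence $\mu(E_{\mathbf{q}})=\min\{1,2\psi(Q)\}\asymp\psi(Q)$. Counting the $\asymp Q^{m-1}$ lattice points with $|\mathbf{q}|_{\infty}=Q$ then gives $\sum_{\mathbf{q}}\mu(E_{\mathbf{q}})\asymp\sum_{Q}Q^{m-1}\psi(Q)$, which ties the hypothesis directly to the total mass of the events.

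The convergence half is immediate: if $\sum_Q Q^{m-1}\psi(Q)<\infty$ then $\sum_{\mathbf{q}}\mu(E_{\mathbf{q}})<\infty$, so by the first Borel--Cantelli lemma almost every $\boldsymbol{\alpha}$ lies in only finitely many $E_{\mathbf{q}}$, i.e. the inequality has at most finitely many solutions, and $\mathcal{A}_m(\psi)$ has measure zero.

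For the divergence half the events are not independent, so the second Borel--Cantelli lemma cannot be invoked directly; instead I would use its quasi-independence refinement, which yields $\mu(\limsup_{\mathbf{q}}E_{\mathbf{q}})>0$ once one proves a pairwise correlation bound of the form $\sum_{|\mathbf{q}|_{\infty},|\mathbf{q}'|_{\infty}\le N}\mu(E_{\mathbf{q}}\cap E_{\mathbf{q}'})\le C\big(\sum_{|\mathbf{q}|_{\infty}\le N}\mu(E_{\mathbf{q}})\big)^2$. To establish this I would expand the indicator of each $E_{\mathbf{q}}$ in a Fourier series on the torus and estimate the intersection measures by the resulting exponential sums. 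For nonparallel $\mathbf{q},\mathbf{q}'$ the two linear conditions decouple and $\mu(E_{\mathbf{q}}\cap E_{\mathbf{q}'})\asymp\mu(E_{\mathbf{q}})\mu(E_{\mathbf{q}'})$, while the parallel pairs $\mathbf{q}'=t\mathbf{q}$ form a lower-order diagonal whose aggregate contribution is controlled using the monotonicity of $\psi$ (which prevents $\psi$ from concentrating its mass on sparse scales). This is exactly where monotonicity is used, explaining why it appears only in the divergent statement. Having secured positive measure, I would upgrade to full measure by a zero--one law: the $\limsup$ set is invariant under the ergodic action generating the periodicity, or alternatively a Lebesgue-density argument shows that $\mu(\mathcal{A}_m(\psi))$ cannot lie strictly between $0$ and $1$.

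The main obstacle is the divergent part, and within it the correlation estimate. Bounding $\mu(E_{\mathbf{q}}\cap E_{\mathbf{q}'})$ requires careful bookkeeping of the degenerate, near-parallel configurations of $\mathbf{q}$ and $\mathbf{q}'$, since these are precisely where naive independence fails; monotonicity of $\psi$ together with a summation-by-parts over dyadic scales are the tools I expect to lean on to keep their total contribution within the required constant multiple of $\big(\sum_{\mathbf{q}}\mu(E_{\mathbf{q}})\big)^2$. The remaining passage from positive to full measure is routine once a zero--one law is in place.
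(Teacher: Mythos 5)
The paper does not prove this theorem: it is imported as a classical result of metric Diophantine approximation, cited to Schmidt and to Hardy--Wright, and only its convergent half is ever used downstream (to lower-bound $d_{\min}$ in the received constellation). So there is no in-paper argument to compare yours against; I can only assess your sketch on its own terms.

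Your convergence half is complete and is the standard proof. The reduction to $[0,1]^m$ by periodicity in $p$, the identification of $\mathcal{A}_m(\psi)$ with a $\limsup$ of the slabs $E_{\mathbf{q}}$, the computation $\mu(E_{\mathbf{q}})=\min\{1,2\psi(Q)\}$ via the fact that $\boldsymbol{\alpha}\mapsto\langle\mathbf{q},\boldsymbol{\alpha}\rangle \bmod 1$ pushes Lebesgue measure to Haar measure for nonzero integer $\mathbf{q}$, the count of $\asymp Q^{m-1}$ vectors of sup-norm $Q$, and the first Borel--Cantelli lemma together give exactly the statement the paper needs, including the quantitative form (\ref{khintchine}) with $\psi(q)=q^{-(m+\epsilon)}$ after absorbing the finitely many exceptional solutions into the constant $\kappa$.

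The divergence half, however, is an outline rather than a proof, and the one step you defer is the entire difficulty. Your claim that $\mathbb{Q}$-linearly independent pairs $\mathbf{q},\mathbf{q}'$ give exact independence of $E_{\mathbf{q}}$ and $E_{\mathbf{q}'}$ is correct (the joint pushforward to $\mathbb{T}^2$ is Haar precisely when no nontrivial integer combination $a\mathbf{q}+b\mathbf{q}'$ vanishes), but the aggregate contribution of the parallel and near-degenerate pairs is where the theorem actually lives: for $m=1$ this is exactly the regime responsible for the Duffin--Schaeffer phenomenon, and monotonicity cannot be invoked as casually as ``prevents $\psi$ from concentrating on sparse scales'' --- one needs a genuine summation-by-parts argument, or the classical continued-fraction route, to close it. Likewise ``a zero--one law'' must be named and verified (Cassels' or Gallagher's zero-one law applies here, but it is a theorem, not a routine density argument). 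None of this affects the paper, since only the convergent part is used, but as a proof of the stated theorem your divergence half should be regarded as incomplete.
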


In this paper, the convergent part of the theorem is concerned. Moreover, given an arbitrary
$\epsilon>0$ the function $\psi(q)=\frac{1}{q^{m+\epsilon}}$ satisfies (\ref{convergence}). In fact, the convergent
part of the theorem used in this paper can be stated as follows. For almost all $m$-tuple real numbers
there exists a constant $\kappa$ such that
\begin{equation}\label{khintchine}
 |p+\alpha_1q_1+\alpha_2q_2+\ldots+\alpha_mq_m|>\frac{\kappa}{(\max_i|q_i|)^{m+\epsilon}}
\end{equation}
holds for all $p\in \mathbb{Z}$ and $\mathbf{q}\in\mathbb{Z}^m$.

The Khintchine-Groshev theorem can be used to bound the minimum distance of points in the received constellation. In
fact, a point in the received constellation has a linear form, i.e., $u_r=G_0u_0+G_1I_1+\ldots+G_mI_m$. Dividing by
$G_0$ and using (\ref{khintchine}), one can conclude that
\begin{equation}\label{khintchine-mg}
 d_{\min}>\frac{\kappa G_0}{(\max_{i\in\{1,\ldots,m\}}Q_i)^{m+\epsilon}}
\end{equation}

The probability of error in hard decoding, see (\ref{error probability}), can be bounded as
\begin{equation}\label{alaki1}
P_e<\exp\left({-\frac{ (\kappa G_0)^2}{8\sigma^2(\max_{i\in\{1,\ldots,m\}}Q_i)^{2m+2\epsilon}}}\right).
\end{equation}

Let us assume $Q_i$ for $i\in\{0,1,\ldots,m\}$ is  $\lfloor  \gamma_i
P^{\frac{1-\epsilon}{2(m+1+\epsilon)}}\rfloor$ where $\gamma_i$ is a constant. Moreover, $\epsilon$ is the
constant appeared in (\ref{khintchine}). We also assume that $G_0=\gamma P^{\frac{m+2\epsilon}{2(m+1+\epsilon)}}$.
As it will be shown later, these
assumptions are realistic and can be applied to the coding schemes proposed in this paper. It is worth
mentioning that in this paper it is assumed that each data stream carries the same rate in the asymptotic case of
high
SNR, i.e., they have the same multiplexing gain. However, in more general cases one may consider different
multiplexing gains
for different data streams. Substituting in (\ref{alaki1}) yields
\begin{equation}
P_e<\exp\left(- \delta P^{\epsilon}\right),
\end{equation}
where $\delta$ is a constant and a function of $\gamma$, $\kappa$, $\sigma$, and $\gamma_i$'s. The lower bound
obtained in (\ref{lower bound on R}) for the achievable rate becomes
\begin{IEEEeqnarray}{rl}
 R_0  & > (1-P_e)\log |\mathcal{U}_0|-1 \nonumber\\
      & \stackrel{a}{=} \left(1-\exp\left(- \delta P^{\epsilon}\right)\right)\log(2\lfloor  \gamma_i
P^{\frac{1-\epsilon}{2(m+1+\epsilon)}}\rfloor)-1\nonumber\\
      & > \frac{(1-\epsilon)\left(1-\exp\left(- \delta P^{\epsilon}\right)\right)}{2(m+1+\epsilon)}(\log(P)
+\vartheta)-1 \label{alaki2}
\end{IEEEeqnarray}
where (a) follows from the fact that $|\mathcal{U_0}|=2Q_0$ and $\vartheta$ is a constant. The multiplexing gain of
the data stream $u_0$ can be computed using (\ref{alaki2}) as follows
\begin{IEEEeqnarray}{rl}
 r_0 &=\lim_{P\rightarrow\infty} \frac{R_0}{0.5\log(P)}\nonumber\\
     & > \frac{1-\epsilon}{m+1+\epsilon}.
\end{IEEEeqnarray}
Since $\epsilon$ can be made arbitrarily small, we can conclude that $r=\frac{1}{m+1}$ is indeed achievable. In the
following theorem, this result and its required conditions are summarized.

\begin{theorem}\label{basic}
A receiver can reliably decode the data stream $u_0$ with multiplexing gain $\frac{1}{m+1}$ from the received signal
$y=G_0u_0+G_1I_1+\ldots+G_mI_m+z$ if the following regularity conditions are satisfied:

\begin{enumerate}
\item $G_0=\gamma P^{\frac{m+2\epsilon}{2(m+1+\epsilon)}}$ where $\gamma$ is a consant.
\item $u_0\in [-Q_0,Q_0] $ where $Q_0=\lfloor  \gamma_0
P^{\frac{1-\epsilon}{2(m+1+\epsilon)}}\rfloor$ and $\gamma_0$ is a constant. Moreover, the uniform distribution is
used to construct the random codebook.
\item For $i\in\{1,2,\ldots,m\}$, $I_i\in [-Q_i,Q_i] $ where $Q_i=\lfloor  \gamma_i
P^{\frac{1-\epsilon}{2(m+1+\epsilon)}}\rfloor$ and $\gamma_i$ is a constant.
\item $G_i$s for $i\in\{0,1,\ldots,_m\}$ are independent over rational numbers.
\item $\{\frac{G_1}{G_0},\frac{G_2}{G_0},\ldots,\frac{G_m}{G_0}\}$ is among $m$-tuples that satisfy
(\ref{khintchine}).
\end{enumerate}
Moreover, the last two conditions hold almost surely.
\end{theorem}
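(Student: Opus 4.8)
The plan is to show that the five regularity conditions, working in concert, drive the hard-decoder error probability $P_e$ to zero super-polynomially fast in $P$ while the transmit constellation $\mathcal{U}_0$ still grows like a positive power of $P$; feeding both facts into the Fano-type lower bound (\ref{lower bound on R}) then yields the multiplexing gain $\frac{1}{m+1}$. The engine driving the whole argument is the convergent part of the Khintchine-Groshev theorem, which controls the minimum distance of the received constellation.

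First I would pin down Property $\Gamma$ and the minimum-distance bound. Condition~4 (rational independence of $\{G_0,\ldots,G_m\}$) guarantees that every point of the received constellation $\mathcal{U}_r=G_0\mathcal{U}_0+G_1\mathcal{I}_1+\ldots+G_m\mathcal{I}_m$ has a unique representation in the basis $\{G_0,\ldots,G_m\}$, so the many-to-one map $\mathcal{U}_r\to\mathcal{U}_0$ is well defined and Property $\Gamma$ holds. Writing a generic received point as the linear form $u_r=G_0u_0+G_1I_1+\ldots+G_mI_m$ and dividing by $G_0$, condition~5 lets me apply (\ref{khintchine}) to the $m$-tuple $\{G_1/G_0,\ldots,G_m/G_0\}$ with $p=u_0$ and $q_j=I_j$; since the integer coefficients $I_j$ are all bounded by $\max_{i\in\{1,\ldots,m\}}Q_i$, this produces the lower bound (\ref{khintchine-mg}) on $d_{\min}$.

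Next I would substitute the scaling choices of conditions~1--3. Inserting (\ref{khintchine-mg}) into the Gaussian tail estimate (\ref{error probability}) gives (\ref{alaki1}); then plugging in $G_0=\gamma P^{\frac{m+2\epsilon}{2(m+1+\epsilon)}}$ and $Q_i=\lfloor\gamma_iP^{\frac{1-\epsilon}{2(m+1+\epsilon)}}\rfloor$, the exponents are arranged so that the argument of the exponential grows like $P^{\epsilon}$, yielding $P_e<\exp(-\delta P^{\epsilon})$. This decay is faster than any polynomial, so the term $P_e\log|\mathcal{U}_0|$ in Fano's inequality becomes asymptotically negligible. The exponents are in fact tuned so that the received signal power, of order $G_0^2Q_0^2$, scales as $P$, matching the definition $\text{SNR}=P/\sigma^2$; the full power-constraint bookkeeping is deferred to the channel-specific applications of the theorem.

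Finally I would carry out the rate computation. With $|\mathcal{U}_0|=2Q_0$ of order $P^{\frac{1-\epsilon}{2(m+1+\epsilon)}}$ and $P_e\to0$, the bound (\ref{lower bound on R}) reduces to (\ref{alaki2}) and hence $r_0>\frac{1-\epsilon}{m+1+\epsilon}$; letting $\epsilon\downarrow0$ delivers the claimed gain $\frac{1}{m+1}$. The almost-sure statement for conditions~4 and~5 then follows from facts already at hand: the set of rationally dependent $(G_0,\ldots,G_m)$ is Lebesgue-null, and the convergent part of Khintchine-Groshev applied to $\psi(q)=q^{-(m+\epsilon)}$, which satisfies (\ref{convergence}), shows $\mathcal{A}_m(\psi)$ has measure zero, so almost every $m$-tuple $\{G_1/G_0,\ldots,G_m/G_0\}$ obeys (\ref{khintchine}). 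The main obstacle I anticipate is the exponent balancing: the single parameter $\epsilon$ must be threaded simultaneously through Khintchine-Groshev, where it fixes the exponent of $\psi$, and through the power-splitting exponents of $G_0$ and $Q_i$, so that $P_e$ still vanishes while the constellation grows fast enough to approach the target gain, all consistently with the power constraint.
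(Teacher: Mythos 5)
Your proposal follows essentially the same route as the paper's own derivation: rational independence gives Property $\Gamma$, the Khintchine--Groshev bound (\ref{khintchine}) applied to $\{G_1/G_0,\ldots,G_m/G_0\}$ yields the minimum-distance estimate (\ref{khintchine-mg}), the scaling of $G_0$ and the $Q_i$ makes $P_e<\exp(-\delta P^{\epsilon})$, and the Fano-type bound (\ref{lower bound on R}) with $\epsilon\downarrow 0$ delivers $r_0=\frac{1}{m+1}$, with the almost-sure claims coming from the Lebesgue-null sets exactly as in the text. Your exponent bookkeeping (the argument of the exponential growing like $P^{\epsilon}$) checks out, so the proposal is correct and matches the paper's proof.
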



\subsection{Two-user $X$ channel: $\text{DOF}=\frac{4}{3}$ is Achievable Almost Surely}

\begin{figure}
\centering
\scalebox{.75} 
{
\begin{pspicture}(0,-1.718125)(9.902813,1.718125)
\rput{-180.0}(6.881875,2.739375){\pstriangle[linewidth=0.04,dimen=outer](3.4409375,1.2396874)(0.6,0.26)}
\psline[linewidth=0.04cm](3.4409375,1.2596875)(3.4409375,0.6596875)
\psline[linewidth=0.04cm](3.2209375,0.6596875)(3.6609375,0.6596875)
\rput{-180.0}(6.881875,-1.260625){\pstriangle[linewidth=0.04,dimen=outer](3.4409375,-0.7603125)(0.6,0.26)}
\psline[linewidth=0.04cm](3.4409375,-0.7403125)(3.4409375,-1.3403125)
\psline[linewidth=0.04cm](3.2209375,-1.3403125)(3.6609375,-1.3403125)
\rput{-180.0}(12.481875,2.739375){\pstriangle[linewidth=0.04,dimen=outer](6.2409377,1.2396874)(0.6,0.26)}
\psline[linewidth=0.04cm](6.2409377,1.2596875)(6.2409377,0.6596875)
\psline[linewidth=0.04cm](6.0209374,0.6596875)(6.4609375,0.6596875)
\rput{-180.0}(12.481875,-1.260625){\pstriangle[linewidth=0.04,dimen=outer](6.2409377,-0.7603125)(0.6,0.26)}
\psline[linewidth=0.04cm](6.2409377,-0.7403125)(6.2409377,-1.3403125)
\psline[linewidth=0.04cm](6.0209374,-1.3403125)(6.4609375,-1.3403125)

\rput(3.8009374,0.9796875){\rnode{T1}{}}
\rput(3.8009374,-1.0203125){\rnode{T2}{}}
\rput(6.0009375,0.9796875){\rnode{R1}{}}
\rput(6.0009375,-1.0203125){\rnode{R2}{}}
\ncline{->}{T1}{R1}
\ncput*[npos=.5]{\small $h_{11}$}
\ncline{->}{T1}{R2}
\ncput*[nrot=:U,npos=.7]{\small $h_{21}$}
\ncline{->}{T2}{R1}
\ncput*[nrot=:U,npos=.7]{\small $h_{12}$}
\ncline{->}{T2}{R2}
\ncput*[npos=.5]{\small $h_{22}$}

\psframe[linewidth=0.04,dimen=outer](2.6009376,1.3796875)(1.0009375,0.5796875)
\psline[linewidth=0.04cm,arrowsize=0.05291667cm
2.0,arrowlength=1.4,arrowinset=0.4]{->}(2.5809374,1.0196875)(3.2209375,1.0196875)
\psline[linewidth=0.04cm,arrowsize=0.05291667cm
2.0,arrowlength=1.4,arrowinset=0.4]{->}(0.3809375,1.1996875)(1.0209374,1.1796875)
\psline[linewidth=0.04cm,arrowsize=0.05291667cm
2.0,arrowlength=1.4,arrowinset=0.4]{->}(0.3809375,0.7996875)(1.0209374,0.7796875)

\usefont{T1}{ptm}{m}{n}
\rput(0.45234376,1.4246875){\textcolor{blue}{$U_1$}}
\rput(0.48234376,0.5646875){\textcolor{red}{$V_1$}}
\rput(3.0023437,1.3246875){$x_1$}
\rput(9.352344,1.5246875){\textcolor{blue}{$\hat{U}_1$}}
\rput(9.372344,0.5246875){\textcolor{blue}{$\hat{U}_2$}}
\rput(6.952344,1.3246875){$y_1$}
\rput(1.8182813,1.0046875){Encoder}
\rput(1.8582813,-1.0153126){Encoder}
\rput(0.49234375,-0.5953125){\textcolor{blue}{$U_2$}}
\rput(0.52234375,-1.4553125){\textcolor{red}{$V_2$}}
\rput(2.9823437,-0.6953125){$x_2$}
\rput(9.322344,-0.4953125){\textcolor{red}{$\hat{V}_1$}}
\rput(9.342343,-1.4953125){\textcolor{red}{$\hat{V}_2$}}
\rput(6.932344,-0.6953125){$y_2$}

\psframe[linewidth=0.04,dimen=outer](8.840938,1.4196875)(7.2409377,0.6196875)
\usefont{T1}{ptm}{m}{n}
\rput(8.068906,1.0446875){Decoder}
\psline[linewidth=0.04cm,arrowsize=0.05291667cm
2.0,arrowlength=1.4,arrowinset=0.4]{->}(6.6209373,0.9996875)(7.2609377,0.9996875)
\psline[linewidth=0.04cm,arrowsize=0.05291667cm
2.0,arrowlength=1.4,arrowinset=0.4]{->}(8.840938,1.2196875)(9.480938,1.1996875)
\psline[linewidth=0.04cm,arrowsize=0.05291667cm
2.0,arrowlength=1.4,arrowinset=0.4]{->}(8.840938,0.8196875)(9.480938,0.7996875)

\psframe[linewidth=0.04,dimen=outer](2.6409376,-0.6403125)(1.0409375,-1.4403125)
\psline[linewidth=0.04cm,arrowsize=0.05291667cm
2.0,arrowlength=1.4,arrowinset=0.4]{->}(2.6209376,-1.0003124)(3.2609375,-1.0003124)
\psline[linewidth=0.04cm,arrowsize=0.05291667cm
2.0,arrowlength=1.4,arrowinset=0.4]{->}(0.4209375,-0.8203125)(1.0609375,-0.8403125)
\psline[linewidth=0.04cm,arrowsize=0.05291667cm
2.0,arrowlength=1.4,arrowinset=0.4]{->}(0.4209375,-1.2203125)(1.0609375,-1.2403125)
\psframe[linewidth=0.04,dimen=outer](8.820937,-0.6003125)(7.2209377,-1.4003125)
\usefont{T1}{ptm}{m}{n}
\rput(8.048906,-0.9753125){Decoder}
\psline[linewidth=0.04cm,arrowsize=0.05291667cm
2.0,arrowlength=1.4,arrowinset=0.4]{->}(6.6009374,-1.0203125)(7.2409377,-1.0203125)
\psline[linewidth=0.04cm,arrowsize=0.05291667cm
2.0,arrowlength=1.4,arrowinset=0.4]{->}(8.820937,-0.8003125)(9.4609375,-0.8203125)
\psline[linewidth=0.04cm,arrowsize=0.05291667cm
2.0,arrowlength=1.4,arrowinset=0.4]{->}(8.820937,-1.2003125)(9.4609375,-1.2203125)
\end{pspicture}
}
\caption{The two-user $X$ channel: Transmitter 1 sends data streams $U_1$ and $V_1$ to Receiver 1 and 2,
respectively.
Similarly, Transmitter 2 sends data streams $U_2$ and $V_2$ to Receiver 1 and 2, respectively.}\label{xchannelfigure}
\end{figure}

The proposed coding scheme using the single layer constellation is applied to the two-user $X$ channel as the first
example. The two-user $X$ channel is introduced in \cite{maddahali2008com} where the first explicit interference
alignment is used to achieve the total DOF of a class of  MIMO $X$ channels. In this channel, see
Figure \ref{xchannelfigure}, there are two transmitters and two receivers. Transmitter 1 wishes to send data streams
$U_1$ and $V_1$ to Receivers 1 and 2, respectively. Similarly, Transmitter 1 wishes to send data streams $U_2$
and $V_2$ to Receivers 1 and 2, respectively. The input-output relation of the channel can be stated as
\begin{IEEEeqnarray}{rl}
 y_1 &=h_{11}x_1+h_{12}x_2+z_1,\nonumber\\
 y_2 &=h_{21}x_1+h_{22}x_2+z_2,\nonumber
\end{IEEEeqnarray}
where $z_1$ and $z_2$ are AWGN with variance $\sigma^2$. $x_1$ and $x_2$ are input symbols of Transmitter 1 and 2,
respectively. Input signals are subject to the power constraint $P$. $h_{ij}$ is the channel gain from Transmitter
$j$ to Receiver $i$. Moreover, channel gains are assumed to be constant over time. $y_1$ and $y_2$ are received
signals at Receiver 1 and 2, respectively.

In \cite{jafar2008dfr}, an upper bound on the DOF of the channel is obtained. This upper bound for the single
antenna case is $\frac{4}{3}$. We will show that this upper bound is in fact achievable. If each data stream
occupies $\frac{1}{3}$ of DOF then the total DOF becomes $\frac{4}{3}$. Therefore, it is assumed that all data
streams, i.e. $U_1$, $U_2$, $V_1$ and $V_2$, use the same constellation with integer points from interval $[-Q,Q]$
with $Q=\lfloor \gamma
P^{\frac{1-\epsilon}{2(3+\epsilon)}}\rfloor$ where $\gamma$ and $\epsilon$ are two arbitrary constants. Transmitter
1 (respectively 2) encodes the data streams $U_1$ and $V_1$ (respectively $U_2$ and $V_2$) utilizing the encoding
scheme proposed in the previous section. The following linear combinations are used to send the data streams through
the channel.
\begin{IEEEeqnarray}{rl}
x_1=G(h_{22}u_1+h_{12}v_1),\label{alaki3}\\
x_2=G(h_{21}u_2+h_{11}v_2),\label{alaki4}
\end{IEEEeqnarray}
where $G$ is the normalizing factor. To find $G$, one needs to calculate the transmit power of User 1 and 2. It is
easy to show that there exists a constant $\gamma'$ such that $G=\gamma' P^{\frac{2+2\epsilon}{2(3+\epsilon)}}$
normalizes the transmit power to be less than $P$ at both receivers.

After rearranging, the received signal can be written as
\begin{IEEEeqnarray}{rl}
 y_1
&=Gh_{11}h_{22}u_1+Gh_{12}h_{21}u_2+Gh_{11}h_{12}(\underbrace{
v_1+v_2}_{I_1})+z_1,\nonumber\\
 y_2
&=Gh_{21}h_{22}(\underbrace{u_1+u_2}_{I_2})+Gh_{12}h_{21}v_1+Gh_{11}h_{22}v_2 +z_2.\nonumber
\end{IEEEeqnarray}
Now, it becomes clear why the linear combinations in (\ref{alaki3}) and (\ref{alaki4}) are used to combine the data
streams at the transmitters. In fact,  the data streams $V_1$ and $V_2$ not intended for Receiver 1 arrive with the
same coefficients at Receiver 1. In other words, they are aligned at the receiver and hence their effect can be
regarded as a single data stream. Let $I_1$ denote the sum $v_1+v_2$. Clearly, $I_1$ is an integer and belongs to
$[-2Q\ 2Q]$. Receiver 1 wishes to decode $U_1$ and $U_2$. As proposed in the previous section, each data stream is
decode separately at the receiver. Therefore, decoding of the data stream $U_1$ is first considered. It is easy to
see that all regularity conditions given in Theorem \ref{basic} are satisfied with $m=2$. Hence, Receiver 1 can
reliably decode $U_1$ which has the multiplexing gain of $\frac{1}{3}$. Similarly, Receiver 2 can decode $U_2$ which
has the multiplexing gain of $\frac{1}{3}$. A similar phenomenon happens in the second receiver. Therefore, we have
proved the following theorem.
\begin{theorem}
The DOF of the two-user $X$ channel is $\frac{4}{3}$ almost surely.
\end{theorem}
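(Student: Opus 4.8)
The plan is to prove achievability meeting the upper bound $\frac{4}{3}$ of \cite{jafar2008dfr} by reducing the decoding at each receiver to the single-stream result of Theorem \ref{basic} with $m=2$. The engine is interference alignment: if each transmitter precodes its two streams along the directions in (\ref{alaki3}) and (\ref{alaki4}), then at each receiver the two \emph{unintended} streams arrive with a common coefficient and collapse onto one rational dimension, leaving exactly three occupied dimensions — the desired stream, the co-destined stream, and one aligned interference term. Since Theorem \ref{basic} delivers multiplexing gain $\frac{1}{m+1}$ per stream against $m$ interfering rational dimensions, $m=2$ gives $\frac{1}{3}$ per stream and $\frac{4}{3}$ in total over the four streams.

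Concretely, I would first fix all four constellations to the integer points of $[-Q,Q]$ with $Q=\lfloor\gamma P^{\frac{1-\epsilon}{2(3+\epsilon)}}\rfloor$ and normalize transmit power, producing the scaling $G=\gamma' P^{\frac{2+2\epsilon}{2(3+\epsilon)}}$. Expanding $y_1=h_{11}x_1+h_{12}x_2+z_1$ with the precoders (\ref{alaki3})--(\ref{alaki4}) shows that $v_1$ and $v_2$ both carry coefficient $Gh_{11}h_{12}$, so they merge into the single integer $I_1=v_1+v_2\in[-2Q,2Q]$; symmetrically, in $y_2$ the streams $u_1$ and $u_2$ merge into $I_2=u_1+u_2$. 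This alignment is forced by the algebra of the chosen directions alone, independently of any property of the channel gains.

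Next I would verify the five regularity conditions of Theorem \ref{basic} at Receiver 1 for decoding $U_1$, the remaining three decodings being symmetric. The desired coefficient $G_0=Gh_{11}h_{22}=\gamma' h_{11}h_{22}P^{\frac{2+2\epsilon}{2(3+\epsilon)}}$ carries exactly the exponent $P^{\frac{m+2\epsilon}{2(m+1+\epsilon)}}$ demanded by condition~1 at $m=2$; the constellation bound $Q$ matches conditions~2--3; and the two residual dimensions, carrying $u_2$ and $I_1$, are bounded by a constant multiple of $Q$. The effective coefficient triple is $\{h_{11}h_{22},\,h_{12}h_{21},\,h_{11}h_{12}\}$, and conditions~4--5 ask that these products be rationally independent and that their normalized ratios satisfy (\ref{khintchine}). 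Theorem \ref{basic} then yields a multiplexing gain of $\frac{1}{3}$ for $U_1$; summing the four symmetric streams gives $r_{\text{sum}}=\frac{4}{3}$.

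The main obstacle is not the alignment, which is exact by construction, but the almost-sure clauses. Theorem \ref{basic} guarantees conditions~4 and~5 for a full-measure set of coefficient vectors, but here the effective coefficients are the \emph{products} $\{h_{11}h_{22},\,h_{12}h_{21},\,h_{11}h_{12}\}$ rather than the raw gains, so I must check these products inherit the two properties for almost every $(h_{11},h_{12},h_{21},h_{22})$. Rational dependence would require a nontrivial integer relation $a\,h_{11}h_{22}+b\,h_{12}h_{21}+c\,h_{11}h_{12}=0$; for each fixed $(a,b,c)\neq 0$ the three distinct monomials make this a proper algebraic hypersurface in $\mathbb{R}^4$, hence null, and the countable union over integer triples remains null. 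The failure set for (\ref{khintchine}) is likewise null, since the smooth map $(h_{ij})\mapsto(G_1/G_0,G_2/G_0)$ pulls back the measure-zero set $\mathcal{A}_2(\psi)$ to a measure-zero set. These null sets are precisely what the qualifier \emph{almost surely} records, and together with the converse of \cite{jafar2008dfr} they fix the DOF at exactly $\frac{4}{3}$.
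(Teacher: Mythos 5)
Your proposal is correct and follows essentially the same route as the paper: the same precoders (\ref{alaki3})--(\ref{alaki4}), the same constellation size $Q$ and normalization $G$, and the same reduction of each of the four decodings to Theorem \ref{basic} with $m=2$. You are in fact more explicit than the paper about why conditions 4--5 hold almost surely for the \emph{product} coefficients $\{h_{11}h_{22},h_{12}h_{21},h_{11}h_{12}\}$; the only nit is that your pullback argument needs the map $(h_{ij})\mapsto(G_1/G_0,G_2/G_0)$ to be a submersion almost everywhere (which it is), since a merely smooth map need not pull back null sets to null sets.
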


\subsection{$K$-user Gaussian Interference Channel: Special Cases}

\begin{figure}
 \centering
\scalebox{1} 
{
\begin{pspicture}(0,-2.99)(6.4628124,2.97)
\psset{linewidth=0.03cm,arrowsize=0.05291667cm
2.0,arrowlength=1.4,arrowinset=0.4}
\usefont{T1}{ptm}{m}{n}

\def\antenna{%
\begin{pspicture}(1,1)
\pstriangle[gangle=-180.0](0,0.25)(0.6,0.25)
\psline(0,0)(0,-0.60)
\psline(-0.22,-0.60)(0.22,-0.60)
\end{pspicture}
}

\rput[bl](1.7,2.5){\antenna}
\rput[bl](1.7,0.5){\antenna}
\rput[bl](1.7,-2.6){\antenna}

\rput[bl](5.5,2.5){\antenna}
\rput[bl](5.5,0.5){\antenna}
\rput[bl](5.5,-2.6){\antenna}

\psline{->}(.8,2.45)(1.4,2.45)
\psline{->}(5.7,2.45)(6.3,2.45)
\rput(0.45,2.45){$x_1$}
\rput(6.7,2.45){$y_1$}

\psline{->}(.8,0.45)(1.4,0.45)
\psline{->}(5.7,0.45)(6.3,0.45)
\rput(0.45,0.45){$x_2$}
\rput(6.7,0.45){$y_2$}

\psline{->}(.8,-2.65)(1.4,-2.65)
\psline{->}(5.7,-2.65)(6.3,-2.65)
\rput(0.45,-2.65){$x_K$}
\rput(6.7,-2.65){$y_K$}

\cnode[linecolor=white](2,2.45){.15}{T1}
\cnode[linecolor=white](2,0.45){.15}{T2}
\cnode[linecolor=white](2,-2.65){.15}{Tk}

\cnode[linecolor=white](5.1,2.45){.15}{R1}
\cnode[linecolor=white](5.1,0.45){.15}{R2}
\cnode[linecolor=white](5.1,-2.65){.15}{Rk}

\ncline{->}{T1}{R1}
\ncput*[npos=.8]{\tiny $h_{11}$}
\ncline[linecolor=red]{->}{T1}{R2}
\ncput*[nrot=:U,npos=.85]{\tiny $h_{21}$}
\ncline[linecolor=red]{->}{T2}{R1}
\ncput*[nrot=:U,npos=.7]{\tiny $h_{12}$}
\ncline{->}{T2}{R2}
\ncput*[npos=.8]{\tiny $h_{22}$}
\ncline{->}{Tk}{Rk}
\ncput*[npos=.75]{\tiny $h_{KK}$}
\ncline[linecolor=red]{->}{Tk}{R1}
\ncput*[nrot=:U,npos=.85]{\tiny $h_{1K}$}
\ncline[linecolor=red]{->}{Tk}{R2}
\ncput*[nrot=:U,npos=.8]{\tiny $h_{2K}$}
\ncline[linecolor=red]{->}{T1}{Rk}
\ncput*[nrot=:U,npos=.8]{\tiny $h_{K1}$}
\ncline[linecolor=red]{->}{T2}{Rk}
\ncput*[nrot=:U,npos=.7]{\tiny $h_{K2}$}

\psdots[dotsize=0.12](5.5,-0.69)
\psdots[dotsize=0.12](5.5,-1.13)
\psdots[dotsize=0.12](5.5,-1.51)
\psdots[dotsize=0.12](1.6209375,-0.75)
\psdots[dotsize=0.12](1.6409374,-1.19)
\psdots[dotsize=0.12](1.6209375,-1.57)
\end{pspicture}
}
\caption{The $K$-user GIC. User $i$ for $i\in \{1,2,\ldots,K\}$ wishes to communicate with its
corresponding receiver while receiving interference from other users.}\label{k-user IC}
\end{figure}

The $K$-user GIC models a network in which $K$ transmitter-receiver pairs (users) sharing a common bandwidth wish
to have reliable communication at maximum rate. The channel's input-output relation can be stated as, see Figure
\ref{k-user IC},
\begin{IEEEeqnarray}{rl}\label{k-user model}
 y_1 &=h_{11}x_1+h_{12}x_2+\ldots +h_{1K}x_K+z_1,\nonumber\\
 y_2 &=h_{21}x_1+h_{22}x_2+\ldots +h_{2K}x_K+z_2,\nonumber\\
\vdots\ &=\quad  \vdots \quad\qquad \vdots \quad\qquad\ddots\qquad\vdots \\
y_K &=h_{K1}x_1+h_{K2}x_2+\ldots +h_{KK}x_K+z_K,\nonumber
\end{IEEEeqnarray}
where $x_i$ and $y_i$ are input and output symbols of User $i$ for $i\in\{1,2,\ldots,K\}$, respectively. $z_i$ is
AWGN with variance $\sigma^2$ for $i\in\{1,2,\ldots,K\}$. Transmitters are subject to the power constraint $P$.

An upper bound on the DOF of this channel is obtained in \cite{cadambe2008iaa}. The upper bound states that the
total DOF of the channel is less than $\frac{K}{2}$ which means each user can at most use one half of its maximum
DOF. This upper bound can be achieved by using single layer constellation in special case where all cross gains are
rational numbers \cite{Etkin-Ordentlich}. This is due to the fact that these coefficients lie on a single rational
dimensional space and therefore the effect of the interference caused by several transmitters behaves as that of
interference caused by a single transmitter. Using a single data stream, one can deduce that the multiplexing gain of
$\frac{1}{2}$ is achievable for each user.

Restriction to transmission of single data streams is not optimal in general. As an example showing this fact, in
the next subsection, it is proved that by having multiple data streams one can obtain higher DOF. However, using
single data streams has the advantage of simple analysis. We are interested in the DOF of the system when each user
employs a single data stream. The following theorem states the result. This in fact
generalizes the result obtained in \cite{Etkin-Ordentlich}.

\begin{theorem}\label{theorem-k-user-special-case}
 The DOF of $\frac{K}{m+1}$ is achievable for the $K$-user Gaussian interference channel using the single data stream
transmission scheme provided the set of cross gains at each receiver has the rational dimension of at most $m$.
\end{theorem}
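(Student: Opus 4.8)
The plan is to let every user transmit a single data stream and then invoke Theorem~\ref{basic} at each receiver, exploiting the rational-dimension hypothesis to perform the interference alignment automatically. Concretely, user $i$ forms $x_i = A u_i$, where $u_i$ is drawn uniformly from the integers in $[-Q,Q]$ with $Q = \lfloor \gamma_0 P^{\frac{1-\epsilon}{2(m+1+\epsilon)}}\rfloor$ and $A = \gamma P^{\frac{m+2\epsilon}{2(m+1+\epsilon)}}$. A direct power computation gives $E[x_i^2] \sim A^2 Q^2/3$; since the exponents of $A$ and $Q$ sum to $\frac{1}{2}$, the product $A^2 Q^2$ is linear in $P$, so the constants $\gamma$ and $\gamma_0$ can be tuned to meet the power constraint $E[x_i^2]\le P$ at every transmitter.

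The heart of the argument is the alignment step at a fixed receiver $i$, where the received signal reads $y_i = h_{ii} A u_i + A\sum_{j\neq i} h_{ij} u_j + z_i$. By hypothesis the cross gains $\{h_{ij}:j\neq i\}$ have rational dimension $m_i\le m$, so by the definition of rational dimension together with the Remark allowing integer coefficients there exist reals $H_1,\ldots,H_{m_i}$ and integers $\alpha_{jk}$ with $h_{ij}=\sum_{k=1}^{m_i}\alpha_{jk}H_k$. Substituting and regrouping collapses the interference,
\begin{equation*}
A\sum_{j\neq i} h_{ij} u_j = \sum_{k=1}^{m_i} (AH_k)\, I_k,\qquad I_k := \sum_{j\neq i}\alpha_{jk}u_j\in\mathbb{Z},
\end{equation*}
so that $y_i = G_0 u_i + \sum_{k=1}^{m_i} G_k I_k + z_i$ with $G_0 = h_{ii}A$ and $G_k = AH_k$. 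This is exactly the received-signal model of Theorem~\ref{basic} with effective interference dimension $m_i$.

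It remains to verify the five regularity conditions. Condition~1 holds since $G_0 = (h_{ii}\gamma)P^{\frac{m+2\epsilon}{2(m+1+\epsilon)}}$; conditions~2 and~3 hold because each $I_k$ is an integer bounded by $(\sum_{j}|\alpha_{jk}|)Q$, which gives the required scaling $Q_k = \gamma_k P^{\frac{1-\epsilon}{2(m+1+\epsilon)}}$. Conditions~4 and~5 hold almost surely by Theorem~\ref{basic}: after dividing by $A$ they reduce to the rational independence of $\{h_{ii},H_1,\ldots,H_{m_i}\}$ and to the membership of $\{H_k/h_{ii}\}_{k=1}^{m_i}$ in the full-measure set of $m_i$-tuples satisfying~(\ref{khintchine}). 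Granting these, receiver $i$ decodes $u_i$ reliably with multiplexing gain $\frac{1}{m_i+1}\ge\frac{1}{m+1}$, and summing over the $K$ users gives total DOF at least $\frac{K}{m+1}$.

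The step I expect to be the main obstacle is the careful justification of the almost-sure validity of conditions~4 and~5, because the basis $\{H_k\}$ is not free but is determined by the cross gains; hence the independence of the direct gain $h_{ii}$ from this derived basis, and the Diophantine inequality~(\ref{khintchine}) on the ratios $H_k/h_{ii}$, must be argued with respect to the joint law of the channel gains rather than by treating the $H_k$ as independent generic reals. A clean way around this is to condition on an arbitrary cross-gain configuration of rational dimension $m_i$: for almost every value of the direct gain $h_{ii}$ both the rational independence with $\{H_k\}$ and the inequality~(\ref{khintchine}) for the ratios hold, and integrating over the cross gains yields the claim almost surely.
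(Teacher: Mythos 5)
Your proposal is correct and follows essentially the same route as the paper: a single integer data stream per user with the stated $Q$ and $A$ scalings, collapsing the cross-link interference into at most $m$ integer combinations via the rational-dimension basis, and invoking Theorem~\ref{basic} at each receiver. Your closing remark on conditions~4 and~5 is in fact more careful than the paper, which simply asserts the almost-sure correspondence with the regularity conditions without addressing that the basis $\{H_k\}$ is derived from the cross gains.
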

\begin{proof}
To communicate with its corresponding receiver, each transmitter transmits one data stream modulated with single
layer constellation. It is assumed that all users use the same constellation, i.e., $\mathcal{U}_i=[-Q\ Q]$ for
$i\in\{1,2,\ldots,K\}$. We claim that under the conditions assumed in the theorem each transmitter can achieve the
multiplexing gain of $\frac{1}{m+1}$. To accommodate this data rate, $Q$ is set to $\lfloor
P^{\frac{1-\epsilon}{2(m+1+\epsilon)}}\rfloor$. The transmit signal from Transmitter $i$ is $x_i=Gu_i$ for
$i\in\{1,2,\ldots,k\}$ where $G$ is the normalizing factor and equals $\gamma
P^{\frac{m+2\epsilon}{2(m+1+\epsilon)}}$ and $\gamma$ is a constant. Due to the symmetry obtained by proposed coding
scheme, it is sufficient to analyze the performance of the first user. The received signal at Receiver 1 can be
represented as
\begin{equation}\label{alaki5}
y_1=G(h_{11}u_1+h_{1K}u_2+\ldots+h_{1K}u_K)+z_1.
\end{equation}
Let us assume the rational dimension of $(h_{12},h_{13},\ldots, h_{1K})$ is less than $m$. Hence, there exists a set
of real numbers $(g_1,g_2,\ldots,g_m)$ such that each $h_{1j}$ can be represented as
\begin{equation}
h_{1j}=\sum_{l=1}^{m}\alpha_{jl}g_l,
\end{equation}
where $\alpha_{jl}\in\mathbb{Z}$ for $j\in\{2,\ldots,K\}$ and $l\in\{1,2,\ldots,m\}$. Substituting in (\ref{alaki5})
and rearranging yields
\begin{equation}\label{alaki6}
y_1=G(h_{11}u_1+g_1 I_1+\ldots+g_mI_m)+z_1.
\end{equation}
where $I_l\in\mathbb{Z}$ for $l\in\{1,2,\ldots,m\}$ and
\begin{equation}
I_l =\sum_{j=2}^{K}\alpha_{jl} u_j.
\end{equation}
It is easy to prove that there is a constant $\gamma_l$ such that $I_l\in[-Q_l\ Q_l]$ for $l\in\{1,2,\ldots,m\}$ 
where  $Q_l=\lfloor \gamma_l P^{\frac{1-\epsilon}{2(m+1+\epsilon)}}\rfloor$. Receiver 1 decodes its corresponding
data stream from received signal in (\ref{alaki6}) using the decoding scheme proposed in the previous section. By
one-to-one correspondence with regularity conditions in Theorem \ref{basic}, one can deduce that Receiver one is able
to decode the data stream $u_1$ and in fact the multiplexing gain of $\frac{1}{m+1}$ is achievable almost surely. Due
to the symmetry, we can conclude that the DOF of $\frac{K}{m+1}$ is achievable for the system. This completes the
proof.
\end{proof}

\subsection{Three-user Gaussian Interference Channel:  $\text{DOF}=\frac{4}{3}$ is Achievable Almost Surely}
In this subsection, we consider the three-user GIC. First, the following model is defined as the standard model for
the channel.
\begin{definition}
The three user interference channel is called standard if it can be represented as
\begin{IEEEeqnarray}{rl}\label{alaki7}
y_1&=G_1x_1+x_2+x_3+z_1\nonumber\\
y_2&=G_2x_2+x_1+x_3+z_2\\
y_3&=G_3x_3+x_1+G_0x_2+z_3.\nonumber
\end{IEEEeqnarray}
where $x_i$ for User $i$ is subject to the power constraint $P$. $z_i$ at Receiver $i$ is AWGN with variance
$\sigma^2$.
\end{definition}
In the following lemma, it is proved that in fact characterizing the DOF of the standard channel causes no harm on
the generalization of the problem.
\begin{lemma}
For every three-user GIC there exists a standard channel with the same DOF.
\end{lemma}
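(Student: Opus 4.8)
The plan is to exhibit the standard channel as the image of an arbitrary three-user GIC under invertible scalings of the inputs and outputs, and to argue that each such scaling leaves the DOF unchanged. First I would record the two relevant invariances. Multiplying the received signal $y_i$ by a nonzero constant $d_i$ is an operation the receiver can always perform on its own; being invertible, it alters nothing about decodability, so the entire capacity region---and \emph{a fortiori} the DOF region---is preserved exactly. Substituting $x_i = c_i x_i'$ with $c_i \neq 0$ rescales the power constraint from $P$ to $P/|c_i|^2$; since the DOF is $\lim_{P\to\infty} R/\log P$ and a constant multiplicative change in the power budget only shifts $\log P$ by a constant, the DOF is again unchanged. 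Hence both operations preserve the DOF.

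Next I would compute the effect of the combined substitution: the effective gain from Transmitter $j$ to Receiver $i$ becomes $d_i h_{ij} c_j$. Comparing with the standard model, I need to choose the six scalars $c_1,c_2,c_3,d_1,d_2,d_3$ so that the five off-diagonal entries indexed by $(1,2),(1,3),(2,1),(2,3),(3,1)$ all equal $1$, which are exactly the positions carrying coefficient $1$ in the standard form. The four remaining entries---the three diagonal ones and the $(3,2)$ entry---then define $G_1,G_2,G_3$ and $G_0$. This is five equations in six unknowns, so one degree of freedom remains; fixing $c_3=1$ I would solve
\[
c_3 = 1,\quad c_2 = \frac{h_{13}}{h_{12}},\quad c_1 = \frac{h_{23}}{h_{21}},\quad d_1 = \frac{1}{h_{13}},\quad d_2 = \frac{1}{h_{23}},\quad d_3 = \frac{h_{21}}{h_{31}h_{23}},
\]
and then read off $G_1 = d_1 h_{11} c_1$, $G_2 = d_2 h_{22} c_2$, $G_3 = d_3 h_{33} c_3$, and $G_0 = d_3 h_{32} c_2$. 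Substituting the scalings above gives explicit ratios of the original gains; for instance $G_1 = h_{11}h_{23}/(h_{13}h_{21})$. This shows the scaled channel is precisely of standard form, and by the invariance established above it has the same DOF as the original channel.

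The computation itself is routine ratio-chasing; the one genuine obstacle is degeneracy. The explicit solution requires every cross gain appearing in a denominator---here $h_{12},h_{13},h_{21},h_{23},h_{31}$---to be nonzero. This holds generically, hence almost surely with respect to Lebesgue measure on the channel coefficients, and on that full-measure set the reduction goes through verbatim. To cover the remaining measure-zero cases where some link is absent, I would treat them separately: when a cross link vanishes the channel decouples into a smaller interference problem (or a link simply drops out of a receiver's observation), and the corresponding standard channel is obtained by the same scaling argument applied to the surviving gains, with the absent coefficient set to zero. I would therefore expect the argument to be clean for the generic case and to need only a short remark to dispose of the degenerate ones.
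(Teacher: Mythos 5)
Your proposal is correct and follows essentially the same route as the paper: scale each input by $c_j$ and each output by $d_i$, solve the five equations forcing the off-diagonal entries $(1,2),(1,3),(2,1),(2,3),(3,1)$ to equal one, and read off $G_0,\dots,G_3$, which match the paper's expressions exactly (the paper merely fixes the leftover gauge freedom differently, taking $c_3=h_{12}$ rather than $c_3=1$). Your direct observation that a constant rescaling of the power budget and noise variance shifts $\log P$ by a constant is a slightly cleaner disposal of the normalization issue than the paper's sandwich between two auxiliary channels, and your remark on vanishing cross gains covers a measure-zero degeneracy the paper leaves implicit.
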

\begin{proof}
The channel model is the special case of that of $K$-user GIC in (\ref{k-user model}) where $K=3$, i.e., the
input-output relation can be written as
\begin{IEEEeqnarray}{rl}
y_1&=h_{11}x_1+h_{12}x_2+h_{13}x_3+z_1\nonumber\\
y_2&=h_{21}x_1+h_{22}x_2+h_{23}x_3+z_2\\
y_3&=h_{31}x_1+h_{32}x_2+h_{33}x_3+z_3.\nonumber
\end{IEEEeqnarray}
Clearly, linear operations at transmitters and receivers do not affect the capacity region of the channel. Hence, we
adopt the following linear operations:
\begin{enumerate}
\item Transmitter 1 sends $x_1=\frac{h_{23}h_{12}}{h_{21}}\tilde{x}_1$ to the channel and Receiver 1 divides the
received signal by $h_{12}h_{13}$.
\item Transmitter 2 sends $x_2=h_{13}\tilde{x}_2$ to the channel and Receiver 2 divides the received signal by
$h_{12}h_{23}$.
\item Transmitter 3 sends $x_3=h_{12}\tilde{x}_3$ to the channel and Receiver 3 divides the received signal by
$\frac{h_{21}}{h_{12}h_{23}h_{31}}$.
\end{enumerate}
If $\tilde{y}_i$ for $i\in\{1,2,3\}$ denotes the output of Receiver $i$ after above operations then it is easy to see
that from input $\tilde{x}_i$ to output $\tilde{y}_i$ the channel behaves as (\ref{alaki7}), i.e., it can be written
as \begin{IEEEeqnarray}{rl}\label{alaki8}
\tilde{y}_1&=G_1\tilde{x}_1+\tilde{x}_2+\tilde{x}_3+\tilde{z}_1\nonumber\\
\tilde{y}_2&=G_2\tilde{x}_2+\tilde{x}_1+\tilde{x}_3+\tilde{z}_2\\
\tilde{y}_3&=G_3\tilde{x}_3+\tilde{x}_1+G_0\tilde{x}_2+\tilde{z}_3,\nonumber
\end{IEEEeqnarray}
where $\tilde{z}_i$ is the Gaussian noise at Receiver $i$ for $i\in\{1,2,3\}$ with variance $\sigma_i^2=\delta_i
\sigma^2$  where $\delta_i$ is constant depending on the channel coefficients. Similarly, the input power constraint
of Transmitter $i$ for $i\in\{1,2,3\}$ becomes $P_i=\gamma_i P$  where $\gamma_i$ is constant depending on the
channel coefficients. Moreover, the channel coefficients can be written as
\begin{IEEEeqnarray*}{rl}
&G_0=\frac{h_{13}h_{21}h_{32}}{h_{12}h_{23}h_{31}},\\
&G_1=\frac{h_{11}h_{12}h_{23}}{h_{12}h_{21}h_{13}},\\
&G_2=\frac{h_{22}h_{13}}{h_{12}h_{23}},\\
&G_3=\frac{h_{33}h_{12}h_{21}}{h_{12}h_{23}h_{31}}.
\end{IEEEeqnarray*}

Since the above operations change the input powers as well as the noise variances, the completion of the theorem
requires additional steps to make the power constraints as well as noise variances all equal. Notice that increasing
(resp. decreasing) the power and decreasing (resp. increasing) the noise variance enlarges (resp. shrinks) the
capacity region of the channel. Therefore, two channels are defined as follows. In the first channel with the same
input-output relation as of (\ref{alaki8}) the power constraints at all transmitters and the noise variances at all
receivers are set to $\max\{P_1,P_2,P_3\}$ and $\min\{\sigma_1^2,\sigma_2^2,\sigma_3^2\}$, respectively. Similarly,
in the second channel the power constraints and noise variances are set to $\max\{P_1,P_2,P_3\}$ and
$\min\{\sigma_1^2,\sigma_2^2,\sigma_3^2\}$, respectively. The capacity region of the channel is sandwiched between
that of these two channels. Moreover, at high power regimes the SNRs of these two channel differ by a constant
multiplicative factor. Hence, they share the same DOF and either of them can be used as the desired channel. This
completes the proof.
\end{proof}

Having the standard model, a special case that the total DOF of the channel can be achieved is identified in the
following theorem.
\begin{theorem}\label{theorem 3-user full}
If the channel gain $G_0$ in (\ref{alaki7}) is rational then the DOF of $\frac{3}{2}$ is achievable almost surely.
\end{theorem}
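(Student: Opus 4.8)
The plan is to show that when $G_0$ is rational, a \emph{single} data stream per transmitter, modulated with the single layer constellation, already aligns all interference into one rational dimension at every receiver; Theorem \ref{basic} with $m=1$ then hands each user the multiplexing gain $\frac{1}{m+1}=\frac{1}{2}$, for a total DOF of $\frac{3}{2}$. Concretely, I would set $x_i=Gu_i$, where $u_i$ is an integer drawn uniformly from $[-Q,Q]$ with $Q=\lfloor \gamma P^{\frac{1-\epsilon}{2(2+\epsilon)}}\rfloor$ and $G=\gamma' P^{\frac{1+2\epsilon}{2(2+\epsilon)}}$ a common normalizing factor. A one-line power check gives $E[x_i^2]\sim G^2Q^2\sim P$, so the constraint is met up to a constant absorbable into $\gamma'$.

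Substituting $x_i=Gu_i$ into (\ref{alaki7}) isolates the desired stream and the interference at each receiver, and the crux is that every interference term collapses to a single rational dimension. At Receivers~1 and~2 the interfering streams carry unit coefficients, so $u_2+u_3$ and $u_1+u_3$ are integers confined to $[-2Q,2Q]$, each occupying one rational dimension. At Receiver~3 the interference is $u_1+G_0u_2$; writing $G_0=p/q$ with $p,q$ integers gives $u_1+G_0u_2=\frac{1}{q}(qu_1+pu_2)$, an integer $qu_1+pu_2\in[-(|p|+q)Q,\,(|p|+q)Q]$ scaled by the rational $\frac{1}{q}$, which is again one rational dimension. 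This is precisely where the hypothesis $G_0\in\mathbb{Q}$ is consumed: it is what lets the two interferers at Receiver~3 fuse, matching the effortless fusion at the other two receivers.

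Each receiver is now in the canonical form of Theorem \ref{basic} with $m=1$: the desired coefficient is $GG_i$ (note the clash with the theorem's symbol $G_0$, which here denotes the desired-stream gain $GG_i$, not the channel's $G_0$), and there is a single interference coefficient $Gc_i$ with $c_i\in\{1,1/q\}$. Conditions~1--3 of Theorem \ref{basic} hold by the choice of $Q$ and $G$ above, since $GG_i$ has the required order $P^{\frac{1+2\epsilon}{2(2+\epsilon)}}$ and the interference lies in a bounded interval of the prescribed scaling. It remains only to secure conditions~4--5, namely that the desired direction be rationally independent of the interference basis and that the associated ratio satisfy the Khintchine bound (\ref{khintchine}).

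I expect the almost-sure verification of conditions~4--5 to be the only delicate point. With $G_0$ fixed and rational, the gains $G_1,G_2,G_3$ remain free; at Receiver~$i$ the pair to be tested is $\{G_i,c_i\}$ with $c_i$ rational, so rational independence reduces to $G_i$ being irrational, and the Khintchine requirement is a full-measure condition on the single ratio $G_i/c_i$. Each holds for almost all $G_i\in\mathbb{R}$, so by Fubini the three conditions hold simultaneously for almost all $(G_1,G_2,G_3)$. Pulling this full-measure set back through the rational change of variables in the standard-model lemma leaves a full-measure set of original channel gains on which the argument goes through. Assembling the three per-user guarantees of multiplexing gain $\frac{1}{2}$ then yields the total DOF $\frac{3}{2}$ almost surely, completing the proof.
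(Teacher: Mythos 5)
Your proof is correct and follows essentially the same route as the paper: the paper's own proof is the one-line observation that rational $G_0$ forces the cross gains at every receiver to occupy a single rational dimension, after which it cites Theorem~\ref{theorem-k-user-special-case} with $m=1$; you have simply unrolled that citation, re-deriving the $K=3$, $m=1$ case of that theorem directly from Theorem~\ref{basic}. Your parameter choices, the fusion of $u_1+G_0u_2$ into one integer stream via $G_0=p/q$, and the almost-sure verification of the rational-independence and Khintchine conditions all match the argument underlying the paper's proof.
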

\begin{proof}
If $G_0$ is rational, then the set of cross gains at each receiver takes up one rational dimension. Applying Theorem
\ref{theorem-k-user-special-case} with $m=1$ gives the desired result.
\end{proof}

In general, the event of having rational $G_0$ has probability zero. The following theorem concerns the general case.
\begin{theorem}
The DOF of $\frac{4}{3}$ is achievable for the three-user GIC almost surely.
\end{theorem}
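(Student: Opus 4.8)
The plan is to first invoke the preceding lemma to replace the given channel by its standard form (\ref{alaki7}), and then to exhibit an explicit single-layer scheme, entirely analogous to the $X$-channel construction, in which \emph{every} receiver sees its interference collapsed onto exactly two rational dimensions. Once that is arranged, Theorem \ref{basic} applies with $m=2$ at all three receivers and delivers multiplexing gain $\frac{1}{m+1}=\frac13$ per stream. Concretely, I would let Transmitter 1 carry \emph{two} data streams $u_{1a},u_{1b}$ while Transmitters 2 and 3 each carry a single stream $u_2,u_3$. All four streams use the same single-layer constellation of integers in $[-Q,Q]$ with $Q=\lfloor \gamma P^{\frac{1-\epsilon}{2(3+\epsilon)}}\rfloor$, and the transmit signals are
\begin{IEEEeqnarray}{rl}
x_1 &= G(u_{1a} + G_0 u_{1b}),\nonumber\\
x_2 &= G u_2,\nonumber\\
x_3 &= G u_3,\nonumber
\end{IEEEeqnarray}
where $G=\gamma' P^{\frac{2+2\epsilon}{2(3+\epsilon)}}$ is chosen, exactly as in the $X$-channel, so that each transmit power stays below $P$. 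The only nonobvious design decision is to send Transmitter 1's second stream along the direction $G_0$; every other choice is then forced.

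The heart of the argument is to verify that this single choice aligns interference simultaneously at all three receivers. Substituting into (\ref{alaki7}) and grouping terms by their irrational ``directions,'' I would obtain
\begin{IEEEeqnarray}{rl}
y_1 &= G\big[ G_1 u_{1a} + G_1 G_0 u_{1b} + \underbrace{(u_2 + u_3)}_{\text{dir. }1}\big] + z_1,\nonumber\\
y_2 &= G\big[ G_2 u_2 + \underbrace{(u_{1a} + u_3)}_{\text{dir. }1} + G_0 u_{1b}\big] + z_2,\nonumber\\
y_3 &= G\big[ G_3 u_3 + u_{1a} + \underbrace{G_0(u_{1b} + u_2)}_{\text{dir. }G_0}\big] + z_3.\nonumber
\end{IEEEeqnarray}
At Receiver 1 the two cross-streams $u_2,u_3$ arrive with the common coefficient $1$ and merge; at Receiver 2 the stream $u_{1a}$ and the cross-stream $u_3$ both arrive along direction $1$ and merge; and at Receiver 3 --- the only receiver whose two interferers carry the rationally independent gains $1$ and $G_0$ --- the direction-$G_0$ component $u_{1b}$ of $x_1$ lands precisely on top of $G_0 u_2$, because Receiver 3 multiplies $x_2$ by exactly $G_0$. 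Hence each received signal is a rational combination of exactly three rationally independent reals: the desired stream(s) plus two aligned interference lumps.

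With the interference reduced to two dimensions everywhere, I would apply Theorem \ref{basic} with $m=2$ at each receiver: Receiver 1 decodes $u_{1a}$ and $u_{1b}$ (treating the other as part of the interference), Receiver 2 decodes $u_2$, and Receiver 3 decodes $u_3$, each at multiplexing gain $\frac13$. It then remains to check the regularity conditions of Theorem \ref{basic}, namely that the coefficient sets $\{G_1,G_1G_0,1\}$, $\{G_2,1,G_0\}$, and $\{G_3,1,G_0\}$ are each rationally independent and satisfy the Khintchine condition (\ref{khintchine}); by the last clause of Theorem \ref{basic} these hold for almost all $(G_0,G_1,G_2,G_3)$, and the residual rational-$G_0$ null set is already covered, with room to spare, by Theorem \ref{theorem 3-user full}. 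Summing the gains yields the total $2\cdot\frac13+\frac13+\frac13=\frac43$ almost surely.

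The step I expect to be delicate is exactly this simultaneous-alignment verification: one must confirm that endowing Transmitter 1 with the extra direction $G_0$ does \emph{not} inflate the dimension count at Receivers 1 and 2, while it does create the needed overlap at Receiver 3, and that the resulting three-element bases are genuinely three-dimensional (not accidentally collapsing) for generic gains --- this is what makes the power normalization and error analysis go through uniformly across receivers. Once this single alignment identity is established, the remainder is a direct, $X$-channel-style invocation of Theorem \ref{basic}, with the constellation scaling, power budget, and Khintchine--Groshev error bound carried over verbatim.
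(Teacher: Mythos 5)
Your proposal is correct and follows essentially the same route as the paper: the identical asymmetric scheme with two streams at Transmitter 1 (the second sent along direction $G_0$), the same power and constellation scalings, the same alignment identities at the three receivers, and the same invocation of Theorem \ref{basic} with $m=2$. (In fact your accounting at Receiver 1 is cleaner than the paper's, which misstates the per-stream gain there as $\tfrac12$ before summing to $\tfrac43$.)
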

\begin{proof}
The encoding used to prove this theorem is asymmetrical. User 1 encodes two data streams while User 2 and 3  encode
only one data stream. In fact, the transmit constellation of Users 1,2, and 3 are $\mathcal{U}_1+G_0\mathcal{U}'_1$,
$\mathcal{U}_2$, and $\mathcal{U}_2$, respectively. It is assumed that $\mathcal{U}_1$, $\mathcal{U}'_1$,
$\mathcal{U}_2$, $\mathcal{U}_3$ are single layer constellation with points in $[-Q\ Q]$. We claim that each data
stream can carry data with multiplexing gain of $\frac{1}{3}$, and since there are four data streams, the DOF of
$\frac{4}{3}$ is achievable. To accommodate such rate $Q=\lfloor \gamma
P^{\frac{1-\epsilon}{2(3+\epsilon)}}\rfloor$ where $\gamma$ and $\epsilon$ are two arbitrary constants. The input
signals from Transmitters 1, 2, and 3 are  $x_1=A(u_1+G_0 u'_1)$, $x_2=Au_2$, and $x_3=Au_3$, respectively. $A$ is
the normalizing factor which controls the output power of all transmitters. It can be readily shown that there exists
a constant $\gamma'$ such that $A=\gamma' P^{\frac{2+2\epsilon}{2(3+\epsilon)}}$.

The decoding at Receivers are performed differently. The received signal at Receiver 1 can be represented as
\begin{equation}
y_1=A(G_1u_1+G_1G_0u'_1+I_1)+z_1,
\end{equation}
where $I_1=u_2+u_3$ is the interference caused by Users 2 and 3. Clearly $I_1\in[-2Q\ 2Q]$. Receiver 1 is interested
in both $u_1$ and $u'_1$ and performs the proposed decoding scheme for each of them separately. By applying Theorem
\ref{basic}, one can deduce that each of data streams $u_1$ and $u'_1$ can accommodate $\frac{1}{2}$ of multiplexing
gain.

The received signal at Receiver 2 can be represented as
\begin{equation}
y_2=A(G_2u_2+I_2+G_0u'_1)+z_2,
\end{equation}
where $I_2=u_1+u_3$ is the aligned part of the interference caused by Users 2 and 3 and $I_2\in[-2Q\ 2Q]$. Receiver 2
is interested in $u_2$ while $I_2$ and $u'_1$ are interference. An application of Theorem \ref{basic} shows that the
multiplexing gain of $\frac{1}{3}$ is achievable for data stream $u_2$.

Finally, the received signal at Receiver 3 can be represented as
\begin{equation}
y_3=A(G_3u_3+u_1+G_0I_3)+z_2,
\end{equation}
where $I_3=u'_1+u_2$ is the aligned part of the interference caused by Users 2 and 3 and $I_3\in[-2Q\ 2Q]$. Receiver
3 is interested in $u_3$ while $I_3$ and $u_1$ are interference. Again by using Theorem \ref{basic}, one can deduce
that the multiplexing gain of $\frac{1}{3}$ is achievable for data stream $u_3$. This completes the proof.
\end{proof}

\section{multi-layer Constellation}\label{sec multiple layer}
In this section, multi-layer constellations are incorporated in the encoding scheme. Here, the focus would be on the
symmetric three-user GIC. This channel is modeled by:
\begin{IEEEeqnarray}{rl}
y_1&=x_1+h(x_2+x_3)+z_1\nonumber\\
y_2&=x_2+h(x_3+x_1)+z_2\\
y_3&=x_3+h(x_1+x_2)+z_3\nonumber
\end{IEEEeqnarray}
where $x_i$ and $y_i$ are the transmit and the received signals of
User $i$, respectively. The additive noise $z_i$ for $i\in
\{1,2,3\}$ is Gaussian distributed with zero mean and variance
$\sigma^2$. Users are subject to the power constraints $P$.

This channel is among channels satisfying conditions of Theorem \ref{theorem 3-user full}. Hence, one can deduce that
 the total DOF of $\frac{3}{2}$ is achievable for this channel almost surely. The reason for considering the
symmetric case is to reveal some aspects of multi-layer constellations. In this section, we obtain an achievable DOF
for all channel gains. For example, it will be shown the multi-layer constellation is capable of achieving the total
DOF of $\frac{3}{2}$ for all irrational gains.

As pointed out in Section \ref{sec coding}, in multi-layer constellations, constellation
points are selected from points represented in the base $W\in\mathbb{N}$. Since the channel is symmetric, all
transmitters use the same constellation $\mathcal{U}$ in which a point can be represented as
\begin{equation}\label{multi-layer points}
u(\mathbf{b})=\sum_{k=0}^{L-1}{b_{l}W^{l}},
\end{equation}
where $b_{l}\in\{0,1,\ldots,a-1\}$ for all $l\in \{0,2,\ldots,L-1\}$. $\mathbf{b}$ represents the vector
$(b_{0},b_{1},\ldots,b_{L-1})$. $a$ is the factor which controls the number of constellation points. We assume $a<W$.
Therefore, all constellation points in
(\ref{multi-layer points}) are distinct and the size of the constellation is $|\mathcal{U}|=a^L$. Hence, the maximum
rate possible for this data stream is
bounded by $L\log a$.

A random codebook is generated by randomly choosing points form $\mathcal{C}$ using the uniform distribution. This
can be accomplished by imposing a uniform distribution on each $b_{l}$. The signal transmitted by User 1,2, and 3 are
respectively $x_1=Au(\mathbf{b})$, $x_2=Au(\mathbf{b}')$, and $x_3=Au(\mathbf{b}'')$. $A$ is the normalizing factor
and controls the output power.

\begin{remark}
The multi-layer constellation used in this paper has DC component. In fact, this component needs to be removed at all
transmitters. However, it only duplicates the achievable rate and has no effect as far as the DOF is concerned.
\end{remark}

To obtain $A$, one needs to compute the input power. Since $b_{l}$ and $b_{j}$ are independent for $l\neq j$, we have
the following chain of inequalities
\begin{IEEEeqnarray}{rl}
E[X_1^2]&=A^2 W^{2(L-1)} \sum_{l=0}^{L-1}E\left[b_{l}^2\right]
W^{-2l}\nonumber\\
&\leq A^2 W^{2(L-1)}\frac{(a-1)(2a-1)}{6}\sum_{l=0}^{\infty}W^{-2l}\nonumber\\
&\leq A^2 W^{2(L-1)}\frac{a^2}{3}\times\frac{1}{1-W^{-2}}\nonumber\\
&\leq \frac{A^2 a^2 W^{2L}}{W^2-1}.\nonumber
\end{IEEEeqnarray}
Hence, if $A=\frac{\sqrt{(W^{2}-1)P}}{aW^L}$ then
$E\left[{X_i^2}\right]\leq P$ which is the desired power constraint.

Due to the symmetry of the system, it suffices to analyze the first
user's performance. The received constellation signal at Receiver 1 can
be written as
\begin{equation}
y_1=A\sum_{l=0}^{L-1}\Big(b_{l}+hI_{l}\Big) W^{l}+z_1,
\end{equation}
where $I_{l}=b'_{l}+b''_{l}$ is the interference caused by Transmitters 1 and 2. Clearly, the interference is aligned
and $I_{l}\in\{0,1,\ldots,2(a-1)\}$. A point in the received constellation $\mathcal{U}_r$ can be represented as
\begin{equation}\label{noise free signal}
u_r(\mathbf{b},\mathbf{I})=A\sum_{l=0}^{L-1}\Big(b_{l}+hI_{l}\Big)
W^{l},
\end{equation}
where $\mathbf{I}$ represents the vector
$(I_{0},I_{1},\ldots,I_{L-1})$. As pointed out before the received constellation needs to satisfy Property $\Gamma$.
Here, Property $\Gamma$ translates into the following relation:
\begin{equation}\nonumber
\Gamma:~~u_r(\mathbf{b},\mathbf{I})\neq
u_r(\tilde{\mathbf{b}},\tilde{\mathbf{I}}) ~\text{iff}~
(\mathbf{b},\mathbf{I})\neq
(\tilde{\mathbf{b}},\tilde{\mathbf{I}}),
\end{equation}
which means that the receiver is able to extract both
$\mathbf{b}_1$ and $\mathbf{I}_1$ from the received constellation.

Using (\ref{lower bound on R}) to bound the achievable rate, the total DOF of the channel can be written as
\begin{IEEEeqnarray}{rl}
r_{\text{sum}} & = \lim_{P\rightarrow\infty}\frac{3R_1}{0.5\log P}\nonumber\\
&\geq \lim_{P\rightarrow\infty}\frac{3\left(\log |\mathcal{U}|-1-P_e\log |\mathcal{U}|\right)}{{0.5\log
P}}\nonumber\\
& = \lim_{P\rightarrow\infty} \frac{3L(1-P_e)\log a}{0.5\log P}\label{dof rational}
\end{IEEEeqnarray}
where $P_e$ depends on the minimum distance in the received constellation $d_{\min}$ as of (\ref{error probability}).
In fact, to obtain the maximum rate we need to select the design parameters $a$, $W$, and $L$. Selection of these
parameters needs to provide 1) Property $\Gamma$ in the received constellation, 2) exponential decrease in $P_e$ as
$P$ goes to infinity, 3) maximum achievable DOF of the system. In the following, we investigate the relation between
these factors for rational and irrational channel gains separately.

\subsection{Rational Channel Gains}
In this subsection, we prove the following theorem which provides an achievable DOF for the symmetric three-user GIC
with rational gains.

\begin{theorem}\label{thm rational}
The following DOF is achievable for the symmetric three-user GIC where the channel gain is rational, i.e.
$h=\frac{n}{m}$:
\begin{equation*}
r_{\text{sum}}=
\begin{cases}
\frac{3\log(n)}{\log(n(2n-1))} & \text{if $2n\geq m$,}\\
\frac{3\log(s+1)}{\log((s+1)(2s+1))} &\text{if $2n< m$ and $m=2s+1$,}\\
\frac{3\log(s)}{\log(2s^2-n)}&\text{if $2n< m$ and $m=2s$.}
\end{cases}
\end{equation*}

%

\end{theorem}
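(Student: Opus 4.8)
The plan is to instantiate the general machinery of Section~\ref{sec coding} with the symmetric multi-layer constellation and then collapse the whole problem to one number-theoretic question about the base $W$. Writing $h=\frac{n}{m}$ with $(m,n)=1$, the received point at Receiver~1 (after dividing out $A$) is $\sum_{l=0}^{L-1}\big(b_l+\frac{n}{m}I_l\big)W^l=\frac{1}{m}\sum_{l=0}^{L-1}(mb_l+nI_l)W^l$, so two distinct received points differ by $\frac{A}{m}$ times a nonzero integer. Hence, as soon as Property $\Gamma$ holds, $d_{\min}\ge \frac{A}{m}$. Substituting $A=\frac{\sqrt{(W^2-1)P}}{aW^L}$ and letting the number of layers grow as $L\sim\frac{(1-\delta)\log P}{2\log W}$ gives $W^L\sim P^{(1-\delta)/2}$, so $d_{\min}\gtrsim P^{\delta/2}/a\to\infty$ and $P_e\to0$ exponentially by (\ref{error probability}); feeding this into (\ref{dof rational}) gives $r_{\text{sum}}\to\frac{3(1-\delta)\log a}{\log W}$, and letting $\delta\to0$ yields $r_{\text{sum}}=\frac{3\log a}{\log W}$. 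Thus the entire task reduces to choosing the pair $(a,W)$ that satisfies Property $\Gamma$ while maximizing $\log_W a$.

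Next I would reduce Property $\Gamma$ to an arithmetic condition. By the remark above, $\Gamma$ is exactly the injectivity of $(\mathbf{b},\mathbf{I})\mapsto\sum_l(mb_l+nI_l)W^l$. Setting $e_l=m\Delta b_l+n\Delta I_l$ for a difference of two candidate messages, with $|\Delta b_l|\le a-1$ and $|\Delta I_l|\le 2(a-1)$, injectivity is equivalent to the implication $\sum_l e_lW^l=0\Rightarrow(\Delta b_l,\Delta I_l)=0$ for all $l$. Reducing modulo $W$ forces $e_0\equiv 0$; since $|e_0|\le M:=(m+2n)(a-1)$, if the design guarantees $2W>M$ then the only candidate nonzero multiples of $W$ in range are $\pm W$, so it suffices to show that $\pm W$ is \emph{not} representable as $m\Delta b+n\Delta I$ in the admissible range. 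Once $e_0=0$ is forced, single-level injectivity of $(b,I)\mapsto mb+nI$ gives $\Delta b_0=\Delta I_0=0$, and dividing by $W$ and recursing settles every layer. Therefore $\Gamma$ reduces to two checks: (i) $(b,I)\mapsto mb+nI$ is injective on $\{0,\dots,a-1\}\times\{0,\dots,2(a-1)\}$, and (ii) $2W>M$ together with $\pm W\notin\{m\Delta b+n\Delta I\}$.

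For (i), coprimality of $m,n$ shows that $mb+nI$ is injective whenever $a\le n$ or $2(a-1)<m$, so I would take $a$ to be the largest integer admissible under these alternatives, namely $a=\max\{n,\lceil m/2\rceil\}$. This is precisely where the three cases of the statement come from: $a=n$ when $2n\ge m$, $a=s+1$ when $2n<m$ and $m=2s+1$, and $a=s$ when $2n<m$ and $m=2s$. For each case I would exhibit the stated $W$ --- respectively $n(2n-1)$, $(s+1)(2s+1)$, and $2s^2-n$ --- and verify (ii) by an elementary divisibility argument: e.g. for $m=2s$ the equation $m\Delta b+n\Delta I=2s^2-n$ forces $m\mid(\Delta I+1)$, hence $\Delta I=-1$ because $|\Delta I+1|\le 2s-1<m$, which in turn forces $\Delta b=s>a-1$, a contradiction (and symmetrically for $-W$), while $2W>M$ follows from $2n<m$. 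Substituting each $(a,W)$ into $r_{\text{sum}}=\frac{3\log a}{\log W}$ reproduces the three displayed formulas.

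The main obstacle is condition (ii): one must take $W$ as \emph{small} as possible --- so that $\log_W a$, hence the DOF, is as large as the statement claims --- while still killing every nonzero representable multiple of $W$ inside $[-M,M]$. This is a question about the gaps of the numerical set $\{m\Delta b+n\Delta I\}$ near its extreme value $M$, and the delicate point is that the optimal $W$ sits strictly below $M$, i.e. genuine carry-over across layers is taking place, so one cannot fall back on the trivial no-carry bound $W>M$. Pinning down the smallest admissible gap, uniformly over the parity of $m$ and the sign of $2n-m$, is what forces the three-way case split and constitutes the real content of the proof.
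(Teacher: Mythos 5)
Your proposal is correct and follows essentially the same route as the paper: reduce to $r_{\text{sum}}=\frac{3\log a}{\log W}$ via $d_{\min}\geq A/m$ under Property $\Gamma$, pick the same three pairs $(a,W)$, and verify $\Gamma$ layer by layer through the divisibility/range argument (your ``single-level injectivity plus non-representability of $\pm W$ with $2W>M$'' is exactly the content of the paper's induction, which rules out $m\Delta b_0+n\Delta I_0=cW$ for $c=0$ and $c=\pm1$). The only difference is organizational — you phrase the choice of $a$ as maximizing over the two injectivity criteria rather than presenting the table directly — and your worked Case III matches the paper's computation.
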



Since $h$ is rational, it can be represented as $h=\frac{n}{m}$
where $(m,n)=1$. In this case, Equation (\ref{noise free signal})
can be written as
\begin{equation}\label{noise free signal rational I}
u_r(\mathbf{b},\mathbf{I})=\frac{A}{m}\sum_{l=0}^{L-1}\Big(m
b_{l}+nI_{l}\Big) W^{l}.
\end{equation}

The theorem is proved by partitioning the set of rational numbers in three subsets and analyzing the performance of
the system in each of them. Let us first assume that Property $\Gamma$ holds for given $W$ and $a$.
To obtain the total DOF of the system, one needs to derive the
minimum distance in the received constellation. It is also easy to
show that $d_{\text{min}}=\frac{A}{m}$. Using (\ref{error
probability}), the bound on the error probability is
\begin{IEEEeqnarray}{rl}\nonumber
P_e & < \exp\left({-\frac{(W^{2}-1)P}{8(am\sigma)^2W^{2L}}}\right).
\end{IEEEeqnarray}
Let $L$ be set as
\begin{equation}\label{K assym}
L=\lfloor \frac{\log\left(P^{0.5-\epsilon}
\right)}{\log(W)}\rfloor,
\end{equation}
where $\epsilon>0$ is an arbitrary constant. Clearly, with this
choice of $K$, $P_2\leq \exp{\left(-\gamma P^{2\epsilon}\right)}$ where $\gamma$ is a constant. This results in
$P_e\rightarrow 0$ as $\text{SNR}\rightarrow \infty$. By using (\ref{dof rational}), the DOF of the system can be
derived as
\begin{IEEEeqnarray}{rl}\nonumber
r_{\text{sum}}&=\lim_{P\rightarrow \infty}
\frac{3L(1-P_e)\log a}{0.5\log P}\\\nonumber &
=\lim_{P\rightarrow \infty}
\frac{3L\log(a)}{0.5\log P}\\\nonumber
&=\lim_{P\rightarrow \infty} \frac{\lfloor
\frac{\log\left(P^{0.5-\epsilon}
\right)}{\log W}\rfloor\log a}{0.5\log P}\nonumber\\
&=\frac{\log a}{\log W}(1-2\epsilon).
\end{IEEEeqnarray}
Since $\epsilon$ can be chosen arbitrarily small, the DOF of the
system can be written as
\begin{equation}\label{dof last}
r_{\text{sum}}=\frac{3\log a}{\log W}.
\end{equation}

From (\ref{dof last}), one can deduce that in order to maximize the total DOF of the system one needs to maximize $a$
and minimize $W$ while respecting Property $\Gamma$. In fact, if it is possible to have $W=a^2$ then the upper bound
of $\frac{3}{2}$ can be touched. However, it is not possible in this case. The above theorem states that $W$ and $a$
can have the relation given in Table \ref{table 1}. Even though the relation is quadratic for all cases, the
achievable DOF is always below the upper bound.

\begin{table}[t]
\caption{Relation between $a$  and $W$ to satisfy Property $\Gamma$.}\label{table 1}
\centering
\small{\begin{tabular}{|c|c|c|c|}
  \hline
  &\textcolor{blue}{$h=n/m$} & \textcolor{blue}{$a$}  & \textcolor{blue}{$W$}   \\
  \hline
  \textcolor{red}{\textbf{Case I}} & $2n\geq m$ & $n$ & $n(2n-1)$  \\
  \hline
  \textcolor{red}{\textbf{Case II}} &$2n<m$ and $m=2s+1$ & $s+1$ & $(s+1)(2s+1)$ \\
  \hline
  \textcolor{red}{\textbf{Case III}} &$2n<m$ and $m=2s$ & $s$  & $2s^2-n$ \\
  \hline
\end{tabular}
}
\end{table}

To complete the proof of Theorem \ref{thm rational}, it is sufficient to prove that Property $\Gamma$ holds for the
cases given in Table \ref{table 1}.

\begin{lemma}
Property $\Gamma$ holds for all cases shown in Table \ref{table 1}.
\end{lemma}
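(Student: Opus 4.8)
The plan is to recast Property $\Gamma$ as an injectivity statement about an integer-valued map and then verify that injectivity separately for each row of Table \ref{table 1}. Since $u_r(\mathbf{b},\mathbf{I})=\frac{A}{m}\sum_{l=0}^{L-1}(mb_l+nI_l)W^l$ with $A/m\neq 0$, Property $\Gamma$ is equivalent to the map $(\mathbf{b},\mathbf{I})\mapsto N:=\sum_{l=0}^{L-1}(mb_l+nI_l)W^l$ being one-to-one on its domain $b_l\in\{0,\ldots,a-1\}$, $I_l\in\{0,\ldots,2(a-1)\}$. The two issues to resolve are that each layer value $e_l:=mb_l+nI_l$ must determine the pair $(b_l,I_l)$, and that the representation $\sum_l e_l W^l$ must determine the layers $e_l$ even though $e_l$ can exceed $W$; I would handle both at once through a reduction modulo $W$.

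The central object is $\phi(b,I):=(mb+nI)\bmod W$ on $\{0,\ldots,a-1\}\times\{0,\ldots,2(a-1)\}$. I claim that if $\phi$ is injective, then injectivity of the full map follows by peeling off the least significant layer: from $\sum_l(mb_l+nI_l)W^l=\sum_l(mb'_l+nI'_l)W^l$, reducing modulo $W$ gives $\phi(b_0,I_0)=\phi(b'_0,I'_0)$, hence $(b_0,I_0)=(b'_0,I'_0)$ and $e_0=e'_0$; subtracting $e_0$ and dividing by $W$ yields an identity of the same form in layers $1,\ldots,L-1$, and induction on $L$ closes the argument. Note injectivity of $\phi$ is stronger than injectivity of $e=mb+nI$ alone, so the layers are decoded unambiguously. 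Thus the whole lemma reduces to showing that $m\beta+n\iota\equiv 0\pmod W$ with $|\beta|\le a-1$ and $|\iota|\le 2(a-1)$ forces $\beta=\iota=0$.

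For Cases I and II I would exploit the factorization of $W$. In Case I, $W=n(2n-1)$ with $\gcd(n,2n-1)=1$; reducing modulo $n$ and using $\gcd(m,n)=1$ gives $\beta\equiv 0\pmod n$, so $\beta=0$ since $|\beta|\le n-1$, after which $n\iota\equiv 0\pmod{n(2n-1)}$ forces $\iota\equiv 0\pmod{2n-1}$, whence $\iota=0$. In Case II, $W=(s+1)(2s+1)$ with $m=2s+1$; reducing modulo $m$ and using $\gcd(n,m)=1$ gives $\iota\equiv 0\pmod{2s+1}$, so $\iota=0$ since $|\iota|\le 2s$, and then $m\beta\equiv 0\pmod{(s+1)m}$ forces $\beta\equiv 0\pmod{s+1}$, whence $\beta=0$. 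Both are routine once the correct factor is chosen for the first reduction.

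The main obstacle is Case III, where $W=2s^2-n$ does not factor, so the Chinese-remainder argument is unavailable. Here I would instead use the congruence $2s^2\equiv n\pmod W$ together with $m=2s$. Multiplying $m\beta+n\iota\equiv 0\pmod W$ by $s$ converts $2s^2\beta$ into $n\beta$, giving $n(\beta+s\iota)\equiv 0\pmod W$; since $\gcd(m,n)=1$ forces $\gcd(n,2s^2)=1$ and hence $\gcd(n,W)=1$, this yields $\beta+s\iota\equiv 0\pmod W$. The decisive step is the size estimate $|\beta+s\iota|\le (s-1)+s(2s-2)=2s^2-s-1<2s^2-n=W$, which holds because $n\le s-1$ in this case; therefore $\beta+s\iota=0$, and since $|\beta|\le s-1<s$ this forces $\iota=0$ and then $\beta=0$. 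Making this multiplier-and-bound argument close cleanly (and confirming $\gcd(n,W)=1$) is the only genuinely delicate part, the other two cases being direct consequences of the factorization of $W$.
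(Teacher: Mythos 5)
Your proof is correct, and its skeleton matches the paper's: both peel off the least significant layer by induction on $L$ and reduce everything to a single-layer statement that is then checked case by case. The packaging differs in a way worth noting. The paper splits the inductive step into two subcases (carry $c=0$ versus $c\neq 0$, writing $m\hat b_0+n\hat I_0=cW$), whereas you fold both into the single congruence $m\beta+n\iota\equiv 0\pmod W$ and prove injectivity of $\phi(b,I)=(mb+nI)\bmod W$ once; this is tidier and makes the decodability of the pair $(b_0,I_0)$ from the residue explicit. Cases I and II are then handled by the same divisibility/size arguments in both proofs. The real divergence is Case III: the paper first bounds $|m\hat b_0+n\hat I_0|<2W$ to force $c=\pm1$, substitutes $W=2s^2-n$, and runs a divisibility argument on $2s(b_0-\tilde b_0)+n(I_0-\tilde I_0+1)=2s^2$; you instead multiply the congruence by $s$, use $2s^2\equiv n\pmod W$ and $\gcd(n,W)=1$ to get $\beta+s\iota\equiv 0\pmod W$, and close with the bound $|\beta+s\iota|\le 2s^2-s-1<W$ (valid since $n\le s-1$ here). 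Your Case III argument is arguably cleaner since it avoids the explicit carry enumeration, while the paper's version makes the "why $W=2s^2-n$ works" mechanism more visible; both are complete.
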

\begin{proof}
This lemma is proved by induction on $L$. To show that the lemma holds for
$L=0$, it is sufficient to prove that the equation
\begin{equation}\label{diophantine eq1}
m(b_{0}-\tilde{b}_{0})+n(I_{0}-\tilde{I}_{0})=0
\end{equation}
has no nontrivial solution when
$b_{0},\tilde{b}_{0}\in\{0,1,\ldots,a-1\}$, and
$I_{0},\tilde{I}_{0}\in\{0,1,\ldots,2(a-1)\}$. In fact, two necessary conditions for the equation (\ref{diophantine
eq1}) to have a solution are $I_0-\tilde{I}_0$ is divisible by $m$ and $b_0-\tilde{b}_0$ is divisible by $n$. We can
prove that this equation has no solution if one of the two conditions does not hold. We consider each case
separately.

Case I: In this case $a=n$. Using the fact that
$-(n-1)\leq b_{0}-\tilde{b}_{0}\leq n-1$, one can deduce that $n\nmid (b_{0}-\tilde{b}_{0})$.

Case II: In this case $a=s+1$ where $m=2s+1$. Using the fact that
$-2s\leq I_{0}-\tilde{I}_{0}\leq 2s$, one can deduce that $m\nmid (I_{0}-\tilde{I}_{0})$.

Case III: In this case $a=s$ where $m=2s$. Using the fact that
$-2(s-1)\leq I_{0}-\tilde{I}_{0}\leq 2(s-1)$, one can deduce that $m\nmid (I_{0}-\tilde{I}_{0})$.

Now, it is assumed that the statement of the lemma holds for $L-1$.
To show it also holds for $L$, one needs to prove the equation
\begin{IEEEeqnarray}{rl}\label{diophantine eq2}
\frac{A}{m}\sum_{l=0}^{L}\Big(m(b_{l}-\tilde{b}_{l})+n(I_{l}-\tilde{I}_{l})\Big)
W^{l}=0
\end{IEEEeqnarray}
has no nontrivial solution. Equivalently, (\ref{diophantine eq2})
can be written as
\begin{IEEEeqnarray}{rl}
&m(b_{0}-\tilde{b}_{0})+n(I_{0}-\tilde{I}_{0})\nonumber\\
&=W \left(\sum_{l=0}^{L-1}\Big(m(b_{l+1}-\tilde{b}_{l+1})+n(I_{l+1}-\tilde{I}_{l+1})\Big)
W^{l}\right)\label{diophantine eq3}
\end{IEEEeqnarray}
In two steps, we prove that the above equation has no solution. First, it is assumed that the right hand side of
(\ref{diophantine eq3}) is zero. Due to inductive assumption, it
results in $b_{l}=\tilde{b}_{l}$ and $I_{l}=\tilde{I}_{l}$ for
all $l\in\{1,2,\ldots,L-1\}$. In addition, (\ref{diophantine eq3})
reduces to
\begin{IEEEeqnarray}{rl}
m(b_{0}-\tilde{b}_{0})+n(I_{0}-\tilde{I}_{0})=0
\end{IEEEeqnarray}
which is already shown that it has no solution except the trivial
one $b_{0}=\tilde{b}_{0}$ and $I_{0}=\tilde{I}_{0}$. Notice that this step holds for all three cases.

Second, it is assumed that the right hand side of (\ref{diophantine
eq3}) is non-zero. Now, (\ref{diophantine eq3}) can be written as
\begin{equation}\label{diophantine eq4}
m(b_{0}-\tilde{b}_{0})+n(I_{0}-\tilde{I}_{0})=cW,
\end{equation}
where $c\in\mathbb{Z}$ and $c\neq 0$. We prove that (\ref{diophantine eq4}) has no nontrivial solution in each three
cases.

Case I: Since $W=n(2n-1)$ in this case, $n$ divides $n(I_{0}-\tilde{I}_{0})$ as well as $cW$, but it can not
divide $m(b_{0}-\tilde{b}_{0})$ because $(m,n)=1$ and $-(n-1)\leq
b_{0}-\tilde{b}_{0}\leq n-1$. Hence, (\ref{diophantine eq4}) has a
solution if $b_{0}=\tilde{b}_{0}$ which contradicts the fact that
$n|I_{0}-\tilde{I}_{0}|< |c|W$.

Case II: In this case $W=(s+1)(2s+1)$ and $m=2s+1$. Hence, $2s+1$ divides both $m(b_{0}-\tilde{b}_{0})$ and $cW$
whereas it can not divide $n(I_{0}-\tilde{I}_{0})$. This is due to the fact that $(2n,m=2s+1)=1$ and $-2s\leq
I_{0}-\tilde{I}_{0}\leq 2s$. Hence, (\ref{diophantine eq4}) has a
solution if $I_{0}=\tilde{I}_{0}$ which contradicts the fact that
$m|b_{0}-\tilde{b}_{0}|< |c|W$.

Case III: In this case $W=2s^2-n$ and $m=2s$. Due to the symmetry and the
fact that
\begin{equation}
\left|m(b_{0}-\tilde{b}_{0})+n(I_{0}-\tilde{I}_{0})\right|<2W,
\end{equation}
it suffices to assume $l=1$. Substituting $W=2s^2-n$, Equation
(\ref{diophantine eq4}) can equivalently be written as
\begin{equation}
2s(b_{0}-\tilde{b}_{0})+n(I_{0}-\tilde{I}_{0}+1)=2s^2,
\end{equation}
It is easy to observe that $2s$ divides $2s(b_{0}-\tilde{b}_{0})$
as well as $2s^2$, but it can not divide
$n(I_{0}-\tilde{I}_{0}+1)$ because $(2s,n)=1$ and $-(2s-1)\leq
I_{0}-\tilde{I}_{0}\leq 2s-1$. Hence, (\ref{diophantine eq4}) has
a solution if $I_{0}+1=\tilde{I}_{0}$ which is impossible because
$2s|b_{0}-\tilde{b}_{0}|< 2s^2$. This completes the
proof.
\end{proof}

\subsection{Irrational Channel Gains}
In this subsection, it is shown that when the symmetric channel gain is
irrational then the total DOF of the system is achievable, i.e.,
$r_{\text{sum}}=\frac{3}{2}$. This result relies on a theorem in the field of Diophantine approximation due to 
Hurwitz. The theorem states as follows.

\begin{theorem}[Hurwitz \cite{hardy}]
There exist infinitely many solutions to the Diophantine equation
\begin{equation}\label{Hurwitz}
\mid \frac{n}{m}-h\mid < \frac{1}{m^2\sqrt{5}},
\end{equation}
where $h$ is an irrational number and $m,n\in\mathds{N}$.
\end{theorem}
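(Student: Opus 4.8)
The plan is to prove Hurwitz's theorem through the theory of simple continued fractions, showing that the convergents of $h$ themselves already furnish infinitely many solutions to (\ref{Hurwitz}). First I would expand the irrational number $h$ as an infinite continued fraction $h=[a_0;a_1,a_2,\ldots]$ and write $p_k/q_k$ for its $k$-th convergent, with the usual recurrences $p_k=a_kp_{k-1}+p_{k-2}$ and $q_k=a_kq_{k-1}+q_{k-2}$. The lever for the whole argument is the exact error identity
\begin{equation}
\left| h-\frac{p_k}{q_k}\right| = \frac{1}{q_k^2\left(\alpha_{k+1}+\beta_k\right)},
\end{equation}
where $\alpha_{k+1}=[a_{k+1};a_{k+2},\ldots]$ is the $(k+1)$-th complete quotient and $\beta_k=q_{k-1}/q_k<1$. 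With this identity, the desired bound $|h-p_k/q_k|<1/(\sqrt5\,q_k^2)$ is exactly equivalent to the scalar condition $\alpha_{k+1}+\beta_k>\sqrt5$, so the whole problem collapses to showing that this clean inequality holds for infinitely many $k$.

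The core step is to show that $\alpha_{k+1}+\beta_k\le\sqrt5$ cannot hold at three consecutive indices. Assuming it fails at indices $k-1$ and $k$, I would use $\alpha_k=a_k+1/\alpha_{k+1}$ together with the tail identity $1/\beta_k=q_k/q_{k-1}=a_k+\beta_{k-1}$ to rewrite the two failure conditions, with $x=\alpha_{k+1}$ and $y=\beta_k$, as the symmetric pair $x+y<\sqrt5$ and $1/x+1/y<\sqrt5$. The inequalities are strict because $\alpha_{k+1}$ is irrational while $\beta_k$ is rational, so their sum can never equal $\sqrt5$ exactly. Eliminating $x$ between the two inequalities reduces to $y^2-\sqrt5\,y+1<0$, whose roots are $(\sqrt5\pm1)/2$; since $y=\beta_k<1$, this forces $\beta_k>(\sqrt5-1)/2$. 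Running the identical computation on the pair $(k,k+1)$ likewise forces $\beta_{k+1}>(\sqrt5-1)/2$.

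Finally I would extract a contradiction from the recursion $\beta_{k+1}=1/(a_{k+1}+\beta_k)$. If $a_{k+1}\ge2$ then $\beta_{k+1}<1/2<(\sqrt5-1)/2$, whereas if $a_{k+1}=1$ then $\beta_{k+1}=1/(1+\beta_k)<1/\left(1+\tfrac{\sqrt5-1}{2}\right)=(\sqrt5-1)/2$; in either case $\beta_{k+1}<(\sqrt5-1)/2$, contradicting the bound just established. Hence at least one of every three consecutive convergents satisfies (\ref{Hurwitz}), and since $h$ irrational has infinitely many convergents, the inequality admits infinitely many solutions $n=p_k$, $m=q_k$. I expect the main obstacle to be the bookkeeping in the middle paragraph: correctly converting the two consecutive failure conditions into the symmetric pair $x+y<\sqrt5$, $1/x+1/y<\sqrt5$ (which depends on the tail-continued-fraction identity for $\beta_k$), then cleanly eliminating $x$ to reach the quadratic, all while tracking why every inequality must be strict so that the rational quantity $\beta_k$ cannot rest exactly on the boundary value $(\sqrt5-1)/2$.
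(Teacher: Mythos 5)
Your proof is correct, but note that the paper does not prove this statement at all: Hurwitz's theorem is imported as a known result with a citation to Hardy and Wright, and no argument is given in the text. What you have written is essentially the classical continued-fraction proof (Borel's argument) that appears in that reference: the exact error identity $\bigl|h-p_k/q_k\bigr|=1/\bigl(q_k^2(\alpha_{k+1}+\beta_k)\bigr)$, the reduction of two consecutive failures to the symmetric pair $x+y\le\sqrt5$, $1/x+1/y\le\sqrt5$ via $\alpha_k=a_k+1/\alpha_{k+1}$ and $1/\beta_k=a_k+\beta_{k-1}$, the resulting bound $\beta_k>(\sqrt5-1)/2$, and the contradiction from $\beta_{k+1}=1/(a_{k+1}+\beta_k)$. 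All of these steps check out. One small point of hygiene: your stated reason for strictness (``$\alpha_{k+1}$ is irrational while $\beta_k$ is rational, so their sum can never equal $\sqrt5$'') is not by itself a valid argument, since an irrational plus a rational can certainly equal $\sqrt5$; but this does not matter, because you may run the elimination with non-strict inequalities and the only place strictness is genuinely needed is $\beta_k\neq(\sqrt5-1)/2$, which does follow from the rationality of $\beta_k$ and which you correctly flag at the end. For the purposes of this paper, supplying the proof is more than is required, but it is a faithful and complete derivation of the cited result.
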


Hurwitz's theorem approximates an irrational number by a rational
one and the goodness of the approximation is measured by the size of
the denominator.

\begin{theorem}
The total DOF of $\frac{3}{2}$ for the symmetric three-user GIC is achievable for all irrational channel gains.
\end{theorem}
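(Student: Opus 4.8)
The plan is to reuse the rate computation and decoding framework already set up for the multi-layer constellation and reduce the whole problem to a single estimate: a uniform lower bound on the minimum distance $d_{\min}$ of the received constellation. Note first that since $h$ is irrational, Property $\Gamma$ is automatic. Writing a received point (\ref{noise free signal}) as $A(B+hC)$ with $B=\sum_l b_lW^l$ and $C=\sum_l I_lW^l$, the integers $B$ and $C$ are read off from the base-$W$ digits whenever $2(a-1)<W$, and $B+hC=\tilde B+h\tilde C$ forces $B=\tilde B$, $C=\tilde C$ because $1$ and $h$ are rationally independent. Hence decodability is free, and the only thing governing $P_e$ through (\ref{error probability}) is $d_{\min}=A\min\bigl|\sum_l(\beta_l+h\iota_l)W^l\bigr|$. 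The goal is to bound this below uniformly in the number of layers $L$ while keeping $W$ of order $a^2$, so that the DOF expression $r_{\text{sum}}=\frac{3\log a}{\log W}$ of (\ref{dof last}) tends to $\frac32$.

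First I would invoke Hurwitz's theorem (\ref{Hurwitz}) to select, for each target accuracy, a fraction $n/m$ with $(m,n)=1$ and $|mh-n|<\frac{1}{m\sqrt5}$, and let $m\to\infty$ along this sequence. For each $m$ I set $a=\lfloor (m+1)/2\rfloor$, so that $2(a-1)<m$, and analyse the single-layer alphabet $\{b+hI:\ b\in[0,a-1],\ I\in[0,2(a-1)]\}$. Multiplying a difference by $m$ gives $m(\beta+h\iota)=J+\delta\iota$ with $J=m\beta+n\iota\in\mathbb{Z}$ and $\delta=mh-n$. Two observations then pin down the single-layer minimum distance. If $J=0$ then $m\mid\iota$, which with $|\iota|\le 2(a-1)<m$ forces $\iota=0$ and hence $\beta=0$; if $J\neq0$ then $|J+\delta\iota|\ge 1-|\delta|\,2(a-1)>1-\frac1{\sqrt5}>0$. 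Thus distinct single-layer symbols are separated by at least $\delta^{(1)}_{\min}\ge\frac{1}{m}\bigl(1-\frac1{\sqrt5}\bigr)$. This is the step where Hurwitz is indispensable: it is exactly the sub-$1/m^2$ quality of the approximation that keeps $|\delta|\,2(a-1)$ strictly below $1$, and the freedom to push $m\to\infty$ is what makes the argument valid for \emph{every} irrational number rather than almost every one.

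Next I would lift this to the full constellation by choosing $W$ large enough that the highest differing layer dominates all lower ones. Taking $W$ of order $2R/\delta^{(1)}_{\min}$, where $R=(a-1)(1+2h)$ is the single-layer range, gives $\bigl|\sum_l(\beta_l+h\iota_l)W^l\bigr|\ge\delta^{(1)}_{\min}W^{l^*}-R\frac{W^{l^*}}{W-1}\ge\tfrac12\delta^{(1)}_{\min}W^{l^*}\ge\tfrac12\delta^{(1)}_{\min}$ for the top nonzero layer $l^*$, so $d_{\min}\ge\tfrac12 A\,\delta^{(1)}_{\min}\gtrsim A/m$ \emph{uniformly in $L$}. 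Because $a\sim m/2$, this choice makes $W\asymp a^2$, which is the whole point: $\frac{3\log a}{\log W}=\frac{3\log a}{2\log a+O(1)}\to\frac32$ as $m\to\infty$, the constant hidden in $W\asymp a^2$ being irrelevant in the limit. With $L=\lfloor\log(P^{0.5-\epsilon})/\log W\rfloor$ as before, the uniform bound $d_{\min}\gtrsim A/m$ forces $P_e$ to decay like $\exp(-cP^{2\epsilon})$ for each fixed design, and (\ref{lower bound on R})–(\ref{dof rational}) then deliver the DOF $\frac{3\log a}{\log W}$ for that design.

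Finally I would improve the Hurwitz approximation, i.e. let $m\to\infty$ (or let $m=m(P)$ grow slowly with $P$, which still keeps $d_{\min}\gtrsim\text{const}\cdot P^{\epsilon}$ since $W/(ma)$ stays bounded), so the achievable DOF converges to $\frac32$; as $\frac32$ is also the upper bound for the three-user channel, this yields that the total DOF $\frac32$ is achievable for every irrational $h$. I expect the main obstacle to be the quantitative interplay in the two middle steps: one must keep the minimum-distance estimate uniform in $L$ — the perturbation $\delta\iota$ must not accumulate across layers, which is precisely why the distance-preserving choice $W\asymp R/\delta^{(1)}_{\min}$ is required rather than the weaker non-overlap condition $W>R$ (the latter would spuriously suggest a DOF above $\frac32$) — while simultaneously keeping $W$ only quadratic in $a$ so the limit is exactly $\frac32$. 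Securing both at once, with all constants controlled solely through $|mh-n|<\frac1{m\sqrt5}$, is the delicate part, and it is also what explains the contrast with the rational case, where $a$ is pinned by the fixed denominator and cannot be driven to infinity.
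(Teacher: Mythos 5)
Your proposal is correct and follows essentially the same route as the paper: Hurwitz's theorem supplies the approximation $h\approx n/m$, the multi-layer constellation is designed with $a\asymp m$ and $W\asymp a^2$ so that $\frac{3\log a}{\log W}\to\frac32$, Property $\Gamma$ follows from irrationality, and the minimum distance is bounded below by roughly $A/m$ uniformly in $L$ before letting $m\to\infty$ with $P$. The only differences are cosmetic — you use a dominant-layer estimate where the paper runs an induction on $L$, and you take $a=\lfloor(m+1)/2\rfloor$ rather than $a\approx m^{1-\epsilon}\sqrt5/4$ — neither of which changes the substance of the argument.
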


\begin{remark}
This result can be readily extended to the symmetric $K$-user
GIC. In fact, it is easy to show that if
the symmetric channel gain is irrational, then $\frac{K}{2}$ is an
achievable DOF.
\end{remark}

For an irrational channel gain $h$, let us assume $m$ and $n$ are
two integers satisfying (\ref{Hurwitz}). Therefore,
$h=\frac{n}{m}+\delta$ where $|\delta|<\frac{1}{m^2\sqrt{5}}$. To
transmit data, $W$ is chosen as
\begin{equation}
W=\Big\lceil
\frac{2(1+2h)(a-1)}{\frac{1}{m}-4(a-1)|\delta|}\Big\rceil+1,
\end{equation}
where $a=\lfloor \frac{m^{1-\epsilon}\sqrt{5}}{4}\rfloor$ and
$\epsilon$ is an arbitrary positive number. The following chain of
inequalities shows that $W$ is positive.
\begin{IEEEeqnarray*}{rl}
4(a-1)|\delta|&\leq \frac{4(a-1)}{m^2\sqrt{5}}\\
&\leq\frac{4a}{m^2\sqrt{5}}\\
&\leq \frac{ m^{1-\epsilon}}{m^2}\\
&\leq \frac{1}{m}.
\end{IEEEeqnarray*}

In the following lemma, it is proved that the received
constellation possesses Property $\Gamma$.
\begin{lemma}
The received constellation in (\ref{noise free signal}) possesses
Property $\Gamma$.
\end{lemma}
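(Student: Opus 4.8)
The plan is to read Property~$\Gamma$ as the injectivity of the map $(\mathbf{b},\mathbf{I})\mapsto u_r(\mathbf{b},\mathbf{I})$ in (\ref{noise free signal}); the forward direction of the ``iff'' is trivial, so it suffices to show that a collision forces the arguments to coincide. First I would factor the noise-free received point by separating the desired digits from the aligned interference digits. Dividing (\ref{noise free signal}) by $A$ gives
\begin{equation}\nonumber
\frac{u_r(\mathbf{b},\mathbf{I})}{A}=\sum_{l=0}^{L-1}b_{l}W^{l}+h\sum_{l=0}^{L-1}I_{l}W^{l}=B+hJ,
\end{equation}
where $B=\sum_{l}b_{l}W^{l}$ and $J=\sum_{l}I_{l}W^{l}$ are \emph{integers}, since each $b_{l},I_{l}\in\mathbb{Z}$ and $W\in\mathbb{N}$. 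Consequently a collision $u_r(\mathbf{b},\mathbf{I})=u_r(\tilde{\mathbf{b}},\tilde{\mathbf{I}})$ is equivalent to the single relation $\Delta B+h\,\Delta J=0$, with $\Delta B=\sum_{l}(b_{l}-\tilde b_{l})W^{l}$ and $\Delta J=\sum_{l}(I_{l}-\tilde I_{l})W^{l}$ again integers.

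The key step is to exploit the irrationality of $h$. If $\Delta J\neq 0$, then $\Delta B+h\,\Delta J=0$ would yield $h=-\Delta B/\Delta J\in\mathbb{Q}$, contradicting the hypothesis that $h$ is irrational; hence $\Delta J=0$, and substituting back forces $\Delta B=0$ as well. Thus the irrationality of $h$ decouples the collision into the two \emph{separate} integer conditions $\sum_{l}(b_{l}-\tilde b_{l})W^{l}=0$ and $\sum_{l}(I_{l}-\tilde I_{l})W^{l}=0$. This decoupling is the heart of the argument and the only place where $h\notin\mathbb{Q}$ is invoked; no Diophantine approximation machinery is needed here.

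It then remains to apply uniqueness of the base-$W$ representation to each integer equation. The desired-signal digit differences obey $|b_{l}-\tilde b_{l}|\leq a-1<W$, while the aligned interference differences, being differences of symbols in $\{0,\ldots,2(a-1)\}$, obey $|I_{l}-\tilde I_{l}|\leq 2(a-1)$. Reducing each equation modulo $W$ and peeling off one digit at a time from $l=0$ upward shows that all coefficients vanish, provided every coefficient lies strictly between $-W$ and $W$. For the $b$-digits this is immediate; for the $I$-digits it requires $2(a-1)<W$, which I would verify from the explicit choice of $W$: because $1+2h>1$ and $m\geq 1$, and since $\tfrac{1}{m}-4(a-1)|\delta|\leq\tfrac{1}{m}$, one has $W\geq 2m(1+2h)(a-1)>2(a-1)$, so the ceiling and the $+1$ make $W\geq 2a-1$ and no carry can occur. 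Hence $b_{l}=\tilde b_{l}$ and $I_{l}=\tilde I_{l}$ for all $l$, which establishes injectivity and therefore Property~$\Gamma$.

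The main obstacle is conceptual rather than computational: one must recognize that the \emph{exact} identity $B+hJ=0$ already splits the two integer streams through irrationality, so that the lemma needs only the coarse inequality $W>2(a-1)$ and not the full closed form of $W$. I would stress that the $\delta$-dependent precision of $W$ is not used for $\Gamma$ at all; its role is to keep $W$ as small as the minimum-distance estimate permits, which is what later yields $\log W\approx 2\log m$ and hence the claimed value $r_{\text{sum}}=\tfrac{3}{2}$.
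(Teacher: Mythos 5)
Your proof is correct and rests on the same core idea as the paper's: a collision would express the irrational $h$ as a ratio of integers. The paper's own proof is essentially a one-liner that writes $h$ as such a ratio and stops; it silently assumes the denominator $\sum_l (I_l-\tilde I_l)W^l$ is nonzero and never finishes the degenerate case, nor does it justify that $\sum_l(b_l-\tilde b_l)W^l=0$ and $\sum_l(I_l-\tilde I_l)W^l=0$ force the digit vectors themselves to coincide. You handle both points: the case $\Delta J=0\Rightarrow\Delta B=0$, and the base-$W$ uniqueness step, for which the digit-difference bounds $|b_l-\tilde b_l|\le a-1$ and $|I_l-\tilde I_l|\le 2(a-1)$ must be checked against $W$; your verification that $W>2(a-1)$ from the explicit choice of $W$ is exactly the missing ingredient (and your observation that only this coarse inequality, not the $\delta$-dependent form of $W$, is needed for Property $\Gamma$ is a fair and useful remark). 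So this is the same route, carried out completely rather than a genuinely different argument.
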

\begin{proof}
Suppose there are $(\mathbf{b},\mathbf{I})$ and
$(\tilde{\mathbf{b}},\tilde{\mathbf{I}})$ such that their
corresponding constellation points are the same. Hence, we have
\begin{equation}
h=-\frac{m\sum_{l=0}^{L-1}(b_{l}-\tilde{b}_{l})W^{l}}{n\sum_{k=0}^{K}(I_{l}-\tilde{I}_{l})W^{l}},
\end{equation}
which is a contradiction, because the right hand side is a rational
number whereas the left hand side is an irrational number. This
completes the proof.
\end{proof}

To characterize the total DOF of the system, we need to derive the
minimum distance of points in the received constellation. In the
following lemma, the minimum distance is obtained.

\begin{lemma}
The minimum distance among the received constellation points with
$L$ levels of coding is lower-bounded as $d_{\text{min}}\geq
A\left(\frac{1}{m}-4(a-1)|\delta|\right)$.
\end{lemma}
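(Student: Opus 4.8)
The plan is to reduce the minimum distance to a single Diophantine quantity. Every difference of two received constellation points equals $A f$, where
\[
f \;=\; \sum_{l=0}^{L-1}\bigl(\beta_l + h\,\iota_l\bigr)W^l,\qquad \beta_l = b_l-\tilde b_l\in\{-(a-1),\dots,a-1\},\quad \iota_l = I_l-\tilde I_l\in\{-2(a-1),\dots,2(a-1)\},
\]
ranging over all $(\boldsymbol{\beta},\boldsymbol{\iota})\neq 0$. Hence it suffices to show $|f|\geq \tfrac1m-4(a-1)|\delta|$ for every nonzero configuration, after which $d_{\min}=A\min|f|\geq A\bigl(\tfrac1m-4(a-1)|\delta|\bigr)$. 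I would prove this lower bound by induction on $L$, mirroring the inductive argument already used for the rational-gain case.

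For the base case (one level), write $\beta_0+h\iota_0=\tfrac{m\beta_0+n\iota_0}{m}+\delta\iota_0$. The crucial observation is that the \emph{integer} $m\beta_0+n\iota_0$ cannot vanish for a nonzero $(\beta_0,\iota_0)$: since $(m,n)=1$, a vanishing sum would force $m\mid\iota_0$, which is impossible because $|\iota_0|\leq 2(a-1)<m$. This last inequality follows from $a=\lfloor m^{1-\epsilon}\sqrt5/4\rfloor$, so that $2(a-1)<\tfrac{\sqrt5}{2}m^{1-\epsilon}<m$ once $m$ is large, a condition guaranteed because Hurwitz's theorem supplies infinitely many approximants with $m\to\infty$. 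Therefore $|m\beta_0+n\iota_0|\geq 1$, giving $|\beta_0+h\iota_0|\geq \tfrac1m-|\delta||\iota_0|\geq \tfrac1m-2(a-1)|\delta|$, which already exceeds the claimed bound.

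For the inductive step I would peel off the lowest level, writing $f=(\beta_0+h\iota_0)+W g$ with $g=\sum_{l'=0}^{L-2}(\beta_{l'+1}+h\iota_{l'+1})W^{l'}$, a sum of the same form over $L-1$ levels. If $g=0$ the base-case estimate applies to $f=\beta_0+h\iota_0$. If $g\neq 0$, the induction hypothesis gives $|g|\geq \tfrac1m-4(a-1)|\delta|$, while the triangle inequality and $|\beta_0+h\iota_0|\leq (a-1)(1+2h)$ yield
\[
|f|\;\geq\; W|g| - (a-1)(1+2h) \;\geq\; W\Bigl(\tfrac1m-4(a-1)|\delta|\Bigr) - (a-1)(1+2h).
\]
It then remains to verify that this is at least $\tfrac1m-4(a-1)|\delta|$, i.e. that $(W-1)\bigl(\tfrac1m-4(a-1)|\delta|\bigr)\geq (a-1)(1+2h)$, and this is precisely what the engineered choice $W\geq \tfrac{2(1+2h)(a-1)}{1/m-4(a-1)|\delta|}+1$ delivers, in fact with a factor of two to spare.

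The main obstacle, and the reason the exact value of $W$ was designed as it was, is this inductive step: the perturbation term $A\delta\sum_l\iota_l W^l$ grows with $L$, so no global bound on the perturbation works. The recursion circumvents this by playing the geometric growth of $W$ against the inductive gap on the higher levels, ensuring at each stage that the bounded carry from lower levels (at most $(a-1)(1+2h)$ in magnitude) is dominated by $(W-1)$ times the worst-case separation of the truncated constellation. The only subtlety worth flagging is the large-$m$ hypothesis $2(a-1)<m$ entering the base case, which is harmless since achievability only needs the estimate along the infinite Hurwitz sequence $m\to\infty$.
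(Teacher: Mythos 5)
Your proof is correct and follows essentially the same route as the paper: induction on $L$, a base case resting on the non-vanishing of the integer form $m\beta_0+n\iota_0$ together with the perturbation bound $|\delta\,\iota_0|\leq 4(a-1)|\delta|$, and an inductive step that plays $W|g|$ against the bounded carry from level zero using the engineered choice of $W$. If anything, you are more explicit than the paper at the base case (spelling out why $(m,n)=1$ and $2(a-1)<m$ force $|m\beta_0+n\iota_0|\geq 1$, and why the Hurwitz sequence $m\to\infty$ makes the large-$m$ hypothesis harmless), which the paper leaves implicit.
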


\begin{proof}
This lemma is also proved by induction on $L$. In order to emphasize
that the minimum distance is a function of $L$, we may write
$d_{\min}(L)$. For $L=0$, we have
\begin{equation}
d_{\min}(0)=\min_{\Omega} ~A|\hat{b}_{0}-h\hat{I}_{0}|,
\end{equation}
where $\hat{b}_{0}=b_{0}-\tilde{b}_{0}$,
$\hat{I}_{0}=\tilde{I}_{0}-I_{0}$, and $\Omega$ is defined as
\begin{equation}\nonumber
\Omega=\{(\hat{b}_{0},\hat{I}_{0}): |\hat{b}_{0}|\leq
2(a-1),|\hat{I}_{0}|\leq 4(a-1)\}.
\end{equation}
Since $h=\frac{n}{m}+\delta$, we have
\begin{IEEEeqnarray}{rl}
d_{\min}(0)&=\min_{\Omega}
~A\left|\hat{b}_{0}-\frac{n}{m}\hat{I}_{0}-\delta
\hat{I}_{0}\right|\\
&\geq\min_{\Omega}
~A\left|\hat{b}_{0}+\frac{n}{m}\hat{I}_{0}\right|-\max_{\Omega}
~A|\delta \hat{I}_{0}|.
\end{IEEEeqnarray}
Since $|\hat{I}_{0}|\leq 4(a-1)$, we have
\begin{equation}
d_{\min}(0)\geq A\left(\frac{1}{m}-4(a-1)|\delta|\right),
\end{equation}
which is the desired result.

Now, it is assumed that the statement in the lemma holds for any $L-1$
level code. We need to show it also holds for $L$ level codes. The
difference between two distinct constellation points is written as
\begin{equation}\label{difference }
\Delta=AW\sum_{l=0}^{L-1}
(\hat{b}_{l+1}-h\hat{I}_{l+1})W^{l}+A(\hat{b}_{0}-h\hat{I}_{0}).
\end{equation}
Let us assume the first term in (\ref{difference }) is zero. In this
case, the minimum distance can be lower-bounded as
\begin{equation}
d_{\min}(L)\geq\min_{\Omega}
~A\left|\hat{b}_{0}-h\hat{I}_{0}\right|.
\end{equation}
The minimization problem is equivalent to that of case
$L=0$. Hence,
\begin{equation}
d_{\min}(L)\geq A\left(\frac{1}{m}-4(a-1)|\delta|\right),
\end{equation}
which is the desired result. If the first term in (\ref{difference
}) is non-zero, then its absolute value is at least $d_{\min}(L-1)$.
By the assumption of induction, we have
\begin{equation}
d_{\min}(L-1)\geq
A\left(\frac{1}{m}-4(a-1)|\delta|\right).
\end{equation}
Therefore, we can obtain the following chain of inequalities
\begin{IEEEeqnarray}{rl}
d_{\min}(K)&=\min_{} |\Delta|\nonumber\\
&\geq Wd_{\min}(K-1)-\max_{}A\left|\hat{b}_{0}-h\hat{I}_{0}\right|\nonumber\\
&\geq AW(\frac{1}{m}-4(a-1)|\delta|)\nonumber\\
&~~~~~~~~~~~~~~~~-2A(1+2h)(a-1)\nonumber\\
&\geq A (\frac{1}{m}-4(a-1)|\delta|)\times\nonumber\\
&~~~~~~~~~~~~~~~~\left(W-\frac{2(1+2h)(a-1)}{\frac{1}{m}-4(a-1)|\delta|}\right)\nonumber\\
&\geq A (\frac{1}{m}-4(a-1)|\delta|).\nonumber
\end{IEEEeqnarray}
This completes the proof.
\end{proof}

Having a lower bound on the minimum distance, we can derive an upper
bound for the error probability as follows
\begin{IEEEeqnarray}{rl}\nonumber
P_e & < \exp{\left(\frac{d_{\min}^2}{8\sigma^2}\right)}\nonumber\\
&\leq \exp\left({-\frac{A^2(\frac{1}{m}-4(a-1)|\delta|)^2}{8\sigma^2}}\right).
\end{IEEEeqnarray}

Due to Hurwitz's theorem, there are infinitely many solutions for
(\ref{Hurwitz}), i.e., there is a sequence of $m$ converging to
infinity and satisfying (\ref{Hurwitz}). Therefore, there exists a
sequence of $P$'s converging to infinity and satisfying
$m=\lfloor\log(P)\rfloor$. We take the limit in (\ref{lower bound on R}) with respect to this sequence. $L$ is again
chosen as
\begin{equation}
L=\lfloor \frac{\log\left(P^{0.5-\epsilon}
\right)}{\log(W)}\rfloor,
\end{equation}

To show that $P_e$ decays exponentially with respect to $P$, we consider the following chain of inequalities
\begin{IEEEeqnarray}{rl}
P_e &\leq \exp\left(-\frac{(W^{2}-1)P}{8a^2\sigma^2W^{2L}}(\frac{1}{m}-4(a-1)|\delta|)^2\right)\nonumber\\
 &\leq \exp\left(-\frac{W^{2}-1}{8a^2\sigma^2}(\frac{1}{m}-4(a-1)|\delta|)^2P^{2\epsilon}\right)\nonumber\\
&\stackrel{(a)}{\simeq}
\exp\left(- \gamma P^{2\epsilon} \right)\rightarrow
0~ \text{as}~ P\rightarrow \infty\nonumber
\end{IEEEeqnarray}
where (a) comes from the fact that $\frac{W^{2}-1}{8a^2\sigma^2}(\frac{1}{m}-4(a-1)|\delta|)^2$ approaches a
constant, say $\gamma$, as $P\rightarrow \infty$. The total DOF can be calculated using (\ref{lower bound on R}) as
follows
\begin{IEEEeqnarray}{rl}\nonumber
r_{\text{sum}}&=\lim_{P\rightarrow \infty}
 \frac{3L\log(a)}{0.5\log P}\nonumber\\
&=\lim_{P\rightarrow \infty} \frac{3\log(a)}{\log(W)}(1-2\epsilon)\nonumber\\
&=\frac{3}{2}(1-\epsilon)(1-2\epsilon).\nonumber
\end{IEEEeqnarray}
Since $\epsilon$ can be chosen arbitrarily small, $r_{\text{sum}}=\frac{3}{2}$
is achievable.

\section{Conclusion}
We proposed a novel coding scheme in which data is modulated using constellation carved from rational points and
directed by multiplying by irrational numbers. Using tools from the filed of Diophantine approximation in number
theory, in particular the Khintchine-Groshev and Hurwitz theorems, we proved that the proposed coding scheme achieves
the total DOF of several channels. We considered the single layer and multi-layer constellations for the encoding
part.

Using the single layer constellation, we proved that the time-invariant two-user $X$ channel and three-user GIC
achieve the DOF of $\frac{4}{3}$ alike. However, for the former it meets the upper bound which means that the total
DOF of the two-user $X$ channel is established. This is the first example in which it is shown that a time invariant
single antenna system does not fall short of achieving its total DOF.

Using the multi-layer constellation, we derived an achievable DOF for the symmetric three-user GIC. We showed that
this achievable DOF is an everywhere discontinuous function with respect to the channel gain. In particular, we
proved that for the irrational channel gains the achievable DOF meets the upper bound $\frac{3}{2}$ and for the
rational gains, even by allowing carry over from multiple layers, the achievable DOF has a gap to the available upper
bounds.


\begin{thebibliography}{10}
\providecommand{\url}[1]{#1}
\csname url@samestyle\endcsname
\providecommand{\newblock}{\relax}
\providecommand{\bibinfo}[2]{#2}
\providecommand{\BIBentrySTDinterwordspacing}{\spaceskip=0pt\relax}
\providecommand{\BIBentryALTinterwordstretchfactor}{4}
\providecommand{\BIBentryALTinterwordspacing}{\spaceskip=\fontdimen2\font plus
\BIBentryALTinterwordstretchfactor\fontdimen3\font minus
  \fontdimen4\font\relax}
\providecommand{\BIBforeignlanguage}[2]{{%
\expandafter\ifx\csname l@#1\endcsname\relax
\typeout{** WARNING: IEEEtran.bst: No hyphenation pattern has been}%
\typeout{** loaded for the language `#1'. Using the pattern for}%
\typeout{** the default language instead.}%
\else
\language=\csname l@#1\endcsname
\fi
#2}}
\providecommand{\BIBdecl}{\relax}
\BIBdecl

\bibitem{TWOWAY:SHANNON}
C.~E. Shannon, ``Two-way communication channels,'' \emph{Proc. 4th Berkeley
  Symp. on Mathematical Statistics and Probability}, vol.~1, pp. 611--644,
  1961.

\bibitem{Etkin}
R.~Etkin, D.~Tse, and H.~Wang, ``Gaussian interference channel capacity to
  within one bit,'' \emph{Information Theory, IEEE Transactions on}, vol.~54,
  no.~12, pp. 5534--5562, December 2008.

\bibitem{abolfazl}
A.~Motahari and A.~Khandani, ``Capacity bounds for the {G}aussian interference
  channel,'' \emph{Information Theory, IEEE Transactions on}, vol.~55, no.~2,
  pp. 620 -- 643, February 2009.

\bibitem{Shang}
X.~Shang, G.~Kramer, and B.~Chen, ``A new outer bound and the
  noisy-interference sum-rate capacity for {G}aussian interference channels,''
  \emph{Information Theory, IEEE Transactions on}, vol.~55, no.~2, pp.
  689--699, February 2009.

\bibitem{Annapureddy}
V.~S. Annapureddy and V.~V. Veeravalli, ``Sum capacity of the {G}aussian
  interference channel in the low interference regime,'' \emph{roceedings of
  ITA Workshop, San Diego, CA}, January 2008.

\bibitem{cadambe2008iaa}
V.~R. Cadambe and S.~A. Jafar, ``Interference alignment and degrees of freedom
  of the {$K$}-user interference channel,'' \emph{Information Theory, IEEE
  Transactions on}, vol.~54, no.~8, pp. 3425--3441, 2008.

\bibitem{maddahali2008com}
M.~A. Maddah-Ali, A.~S. Motahari, and A.~K. Khandani, ``Communication over
  {MIMO $X$} channels: Interference alignment, decomposition, and performance
  analysis,'' \emph{Information Theory, IEEE Transactions on}, vol.~54, no.~8,
  pp. 3457--3470, August 2008.

\bibitem{Ozan-gamal-lai-poor}
O.~O. Koyluoglu, H.~E. Gamal, L.~Lai, and H.~V. Poor, ``Interference alignment
  for secrecy,'' \emph{http://arxiv.org/abs/0810.1187}, 2008.

\bibitem{jafar2008dfr}
S.~A. Jafar and S.~Shamai, ``Degrees of freedom region of the {MIMO $X$}
  channel,'' \emph{Information Theory, IEEE Transactions on}, vol.~54, no.~1,
  pp. 151--170, 2008.

\bibitem{cadambe2008dfw}
V.~R. Cadambe and S.~A. Jafar, ``Degrees of freedom of wireless {$X$}
  networks,'' \emph{Information Theory, 2008. ISIT 2008. IEEE International
  Symposium on}, pp. 1268--1272, July 2008.

\bibitem{Bresler-Parekh-tse}
G.~Bresler, A.~Parekh, and D.~Tse, ``The approximate capacity of the
  many-to-one and one-to-many {G}aussian interference channels,''
  \emph{http://arxiv.org/abs/0809.3554}, 2008.

\bibitem{Sridharan}
S.~Sridharan, A.~Jafarian, S.~Vishwanath, and S.~A. Jafar, ``Capacity of
  symmetric {K}-user {G}aussian very strong interference channels,''
  \emph{http://arxiv.org/abs/0808.2314}, 2008.

\bibitem{sridharan3llc}
S.~Sridharan, A.~Jafarian, S.~Vishwanath, S.~A. Jafar, and S.~Shamai, ``A
  layered lattice coding scheme for a class of three user {G}aussian
  interference channels,'' \emph{http://arxiv.org/abs/0809.4316}, 2008.

\bibitem{Etkin-Ordentlich}
R.~Etkin and E.~Ordentlich, ``On the degrees-of-freedom of the {K}-user
  {G}aussian interference channel,'' \emph{http://arxiv.org/abs/0901.1695},
  2009.

\bibitem{abolfazl-shahab-amir}
A.~S. Motahari, S.~O. Gharan, and A.~K. Khandani, ``On the degrees-of-freedom
  of the three-user gaussian interfererence channel: The symmetric case,''
  \emph{Presented at IEEE International Symposium on Information Theory}, July
  2009.

\bibitem{Cadambe-jarfar-wang}
V.~R. Cadambe, S.~A. Jafar, and C.~Wang, ``Interference alignment with
  asymmetric complex signaling - settling the host-madsen-nosratinia
  conjecture,'' \emph{http://arxiv.org/abs/0904.0274}, 2009.

\bibitem{Gomadam-cadambe-jafar}
K.~S. Gomadam, V.~R. Cadambe, and S.~A. Jafar, ``Approaching the capacity of
  wireless networks through distributed interference alignment,''
  \emph{http://arxiv.org/abs/0803.3816}, 2008.

\bibitem{Huang-jafar}
C.~Huang and S.~A. Jafar, ``Degrees of freedom of the mimo interference channel
  with cooperation and cognition,'' \emph{http://arxiv.org/abs/0803.1733},
  2008.

\bibitem{schmidt}
W.~M. Schmidt, \emph{Diophantine approximation}.\hskip 1em plus 0.5em minus
  0.4em\relax Berlin, Springer-Verlag, 1980.

\bibitem{hardy}
G.~H. Hardy and E.~M. Wright, ``An introduction to the theory of numbers,''
  \emph{fifth edition, Oxford science publications}, 2003.

\end{thebibliography}

\end{document}